\renewcommand{\theequation}{\thesection.\arabic{equation}}
\newtheorem{theorem}{Theorem}
\newtheorem{lemma}{Lemma}
\newtheorem{remark}{Remark}
\newtheorem{definition}{Definition}
\newtheorem{conjecture}{Conjecture}
\numberwithin{theorem}{section}
\numberwithin{lemma}{section}
\numberwithin{corollary}{section}
\numberwithin{proposition}{section}
\numberwithin{remark}{section}
\newcommand{\bs}[1]{\boldsymbol{#1}}
\newcommand{\im}{\bs{\rm i}}
\newcommand{\segawa}{\textcolor{black}}
\newcommand{\red}{\textcolor{black}}
\title{{\Large {\bf Phase measurement of quantum walks: application to structure theorem of the positive support of the Grover walk}}
\author{
{\small Norio KONNO}\\
{\scriptsize Department of Applied Mathematics, 
Faculty of Engineering, Yokohama National University}\\
{\scriptsize Hodogaya, Yokohama 240-8501, Japan}\\
{\small Iwao SATO} \\ 
{\scriptsize Oyama National College of Technology} \\ 
{\scriptsize Oyama, Tochigi 323-0806, Japan} \\
{\small Etsuo SEGAWA\footnote{Email: e-segawa@m.tohoku.ac.jp}} \\
{\scriptsize Graduate School of Information Sciences, Tohoku University} \\
{\scriptsize Sendai 980-8579, Japan}.}
%{\scriptsize }\\
}
\date{\empty }
\begin{document}
\maketitle

\par\noindent
\begin{small}
\par\noindent
{\bf Abstract}. 
We obtain a structure theorem of the positive support of the $n$-th power of the Grover walk on $k$-regular graph whose girth is greater than $2(n-1)$. 
This structure theorem is provided by the parity of the amplitude of another quantum walk on the line which depends only on $k$. 
The phase pattern of this quantum walk has a curious regularity. 
We also exactly show how the spectrum of the $n$-th power of the Grover walk is obtained by lifting up that of the adjacency matrix to the 
complex plain.  
%\footnote[0]{
%{\it Abbr. title:}Construction theorem of the positive support of $n$-th power of the Grover walk
%}
%\footnote[0]{
%{\it AMS 2000 subject classifications: }
%60F05, 05C50, 15A15, 05C60
%%}
%\footnote[0]{
%{\it PACS: } 
%03.67.Lx, 05.40.Fb, 02.50.Cw
%}
\footnote[0]{
{\it Keywords: } 
Quantum walk, phase measurement, positive support
}
\end{small}

\section{Introduction}
The Grover walk is one of the important quantum walk model for not only quantum search algorithm~\cite{Ambainis2003,Kempe2003,Kendon2007} 
but also bridges connecting to the quantum graphs~\cite{SchanzSmilansky,Tanner}, reversible random walks~\cite{HKSS_JFA}, 
and also graph zeta~\cite{KS2011,RenETAL} and graph theory~\cite{Portugal}. 
The time evolution of the Grover walk is explained by a discrete-time analogue of 
reflection and transmission of the wave at each junction, that is, the vertex~\cite{FH}. 
To give more precise definition, we prepare notions of graphs. 
For given $G=(V,E)$, let $A=A(G)$ be the set of symmetric arcs induced by the edge set $E$. 
The inverse arc of $a\in A$ is denoted by $\bar{a}$ and the terminal and original vertices of $a$ are denoted by $t(a)$ and $o(a)$, respectively. 
For any $a\in A$, $|a|$ is the support edge of $a$, thus $|\bar{a}|=|a|\in E$. 
The degree of $u\in V$ is defined by $\mathrm{deg}(u)=|\{a\in A \;|\; t(a)=u\}|$. 
The total Hilbert space $\mathcal{A}$ is generated by the symmetric arc set $A$ of the given graph $G$. 
The quantum coin assigned at each vertex $u$ produces the complex valued weight of the transmission and reflection rate so that 
this representation matrix is a $\mathrm{deg}(u)$-dimensional unitary operator. 
In particular, for the Grover walk case, the transmission rate is $2/\mathrm{deg}(u)$, and the reflection rate is 
$2/\mathrm{deg}(u)-1$. 
Then the Grover walk is defined as follows: 
\begin{definition}Grover walk on $G=(V,A)$
\begin{enumerate}
\item The total Hilbert space: $\mathcal{A}:=\ell^2(A)=\{\psi: A\to \mathbb{C} \;|\; ||\psi||<\infty \}$.
Here the inner product is the standard inner product, that is, $\langle \psi,\varphi \rangle=\sum_{a\in A}\overline{\psi(a)}\varphi(a)$. 
\item Time evolution $U: \mathcal{A}\to \mathcal{A}$ (unitary)
	\[ (U\psi)(a)=\sum_{b:t(b)=o(a)} \left( \frac{2}{\deg(o(a))}-\delta_{b,\bar{a}} \right)\psi(b). \]
\end{enumerate}
\end{definition}

Let the the total time evolution operator of the Grover walk and the $n$-th iteration of the Grover walk starting from the initial state $\psi_0$ 
be denoted by $U$ and $\psi_n$, respectively.  
We introduce two non-linear maps $\mu: \mathcal{A}\to \ell(A)$ and $\nu: \mathcal{A}\to [0,2\pi]^A$  as follows: 
for $\psi\in \mathcal{A}$, 
	\begin{align}
        (\mu(\psi))(a) &= |\psi(a)|^2, \\
        (\nu(\psi))(a) &= \arg (\psi(a)). 
        \end{align}
Due to the unitarity of the Grover walk, $\mu_n:= \mu(\psi_n)$ becomes a probability distribution when the norm of the initial state $\psi_0$ is unit. 
Main interest of the Grover walk has been the investigation of the sequence of $\mu_n$'s: %omitting every information of the phase: 
the typical behaviors of quantum walks drive from observing the behavior $\mu_n$, for example, 
the coexistence of linear spreading and localization e.g., \cite{Konno2008b,Suzuki} 
and its stationary measure for infinite graphs e.g., \cite{KonnoTakei}, 
the efficiency to the quantum search algorithm e.g., \cite{Portugalbook} and its references therein
and perfect state transfer e.g., \cite{Godsil,Stefanak}. 
%and periodicity e.g., \cite{HKSS_IIS} and so on for finite graphs. 
However it seems to be natural to investigate also the phase measurement $\nu_n$'s. 
Indeed, we focus on this $\nu$ in this paper since the map $\nu$ plays a key role 
to give the structure theorem of the positive support of the Grover walk. 
Here for the real matrix $M$, the positive support of $M$; $S(M)$, is defined by
	\begin{equation}
        (S(M))(b,a) = \begin{cases} 1 & \text{: $(M)_{b,a}>0$} \\ 0 & \text{: $(M)_{b,a}\leq 0$} \end{cases}
        \end{equation} 
Taking the positive support of the Grover walk is first motivated by the fact of a direct connection  
between the Ihara zeta function and the positive support of the time evolution of the Grover walk~\cite{RenETAL}: 
	\[ \zeta_G(u)=\det(1-uS(U_G))^{-1}, \]
where $\zeta_G(u)$ is the Ihara zeta function and $U_G$ is the time evolution operator of the Grover walk induced by graph $G$. 
Zeta function of a graph was started from Ihara zeta function of a graph \cite{Ihara1966}. 
Originally, Ihara defined $p$-adic Selberg zeta functions of 
discrete groups, and showed that its reciprocal is a explicit polynomial. 
Serre \cite{Serre} pointed out that the Ihara zeta function is the zeta function of 
the quotient $T/ \Gamma $ (a finite regular graph) of the one-dimensional 
Bruhat-Tits building $T$ (an infinite regular tree) associated with $GL(2, k_p)$. 
A zeta function of a regular graph $G$ associated with a unitary 
representation of the fundamental group of $G$ was developed by 
Sunada \cite{Sunada1986,Sunada1988}. 
Hashimoto \cite{Hashimoto1989} treated multivariable zeta functions of bipartite graphs. 
Bass \cite{Bass1992} generalized Ihara's result on the zeta function of 
a regular graph to an irregular graph, and showed that its reciprocal is 
again a polynomial. 
New proofs for Bass' formula were given in \cite{FZ1999,KS2000,ST1996}. 
%Ren et al \cite{RenETAL} showed that the transpose of the positive support of the Grover walk 
%is equal to the edge matrix appeared in a determinant expression for Ihara zeta function of a graph. 
Furthermore, Konno and Sato \cite{KS2011} presented a explicit formula for the characteristic 
polynomial of the Grover walk on a graph $G$ by using the determinant expression for 
the second weighted zeta function of $G$, and directly obtained spectra for the Grover walk on $G$. 

The second motivation to take the positive support to the Grover walk operator is that 
the spectrum of the positive support of the Grover walk has been believed to be a strong tool for the graph isomorphism problem: 
\begin{conjecture}\cite{EmmsETAL2006}
Let $G$ and $H$ be strongly regular graphs. Then
\[ G\cong H \Leftrightarrow \sigma(S(U_G^3))=\sigma(S(U_H^3)). \]
\end{conjecture}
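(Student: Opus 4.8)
The forward implication needs no hypothesis on $G,H$ beyond being graphs: an isomorphism $\phi\colon G\to H$ extends to a bijection $\Phi\colon A(G)\to A(H)$ of arc sets, and the associated permutation matrix $P_\Phi$ satisfies $P_\Phi U_G P_\Phi^{-1}=U_H$, hence $P_\Phi U_G^3 P_\Phi^{-1}=U_H^3$. Since $P_\Phi$ is nonnegative and merely relabels coordinates, it commutes with the operation $M\mapsto S(M)$, so $P_\Phi\,S(U_G^3)\,P_\Phi^{-1}=S(U_H^3)$; the two matrices are similar and therefore $\sigma(S(U_G^3))=\sigma(S(U_H^3))$. All the content is in the converse.

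For the converse the plan is to invoke the structure theorem of this paper to render $S(U_G^3)$ completely explicit as a $0$-$1$ matrix on the arc space whose $(b,a)$-entry is dictated by the parity of the amplitude of the auxiliary line walk — the one that depends only on the regularity $k$ — along the non-backtracking-type paths of length at most $3$ joining $a$ to $b$ in $G$. Because a strongly regular graph has fixed $k$ and local structure controlled by $(v,k,\lambda,\mu)$, one should be able to express $S(U_G^3)$ as a Boolean/polynomial combination of the arc-lifted identity, all-ones matrix, adjacency matrix, and its low powers, and then read off $\sigma(S(U_G^3))$ from $\sigma(A(G))$ through the ``lifting the adjacency spectrum to the complex plane'' description announced in the abstract. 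The aim is to reduce the equality $\sigma(S(U_G^3))=\sigma(S(U_H^3))$ to an equality of concrete spectral data attached to $A(G),A(H)$ together with the SRG parameters, and then to argue that this data forces $G\cong H$.

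I expect the obstacle to be twofold. First, strongly regular graphs have girth $3$ or $4$, so the hypothesis girth $>2(n-1)=4$ of the structure theorem fails at $n=3$; one must first build the corrected local model in which the entries of $S(U_G^3)$ additionally record the contributions of triangles and $4$-cycles through each vertex, so that the resulting formula genuinely depends on $\lambda,\mu$ and on the isomorphism type of the local neighbourhood rather than on raw counts. Second, and more seriously, even granting a perfect combinatorial formula, the converse asserts that the eigenvalue multiset of this particular $0$-$1$ matrix is a \emph{complete} invariant on the class of strongly regular graphs — a statement strictly stronger than ``the parameters plus the adjacency spectrum suffice'' (which is false), hence genuinely open. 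I would therefore not expect to settle the conjecture; the realistic outcome is to convert it into a sharp, checkable statement about the phase pattern of the $k$-dependent auxiliary walk and to verify it for prescribed parameter sets or small orders, which is where I anticipate the real difficulty to lie.
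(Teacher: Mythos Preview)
There is no proof in the paper to compare against: the statement is explicitly labeled a \emph{Conjecture} and is not proved anywhere in the text. Immediately after stating it, the authors remark that a counterexample has recently been suggested by Godsil, Guo and Myklebust~\cite{GGM}, so the converse direction is in fact expected to be \emph{false}, and the remaining open problem the paper highlights is to determine the subclass of strongly regular graphs on which the conjecture survives.

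Your forward implication is correct and standard. Your discussion of the converse correctly identifies the two essential obstructions --- that strongly regular graphs have girth $3$ or $4$, so the hypothesis $g(G)>2(n-1)$ of Theorem~\ref{mainThm} fails at $n=3$, and that even a perfect structural formula would still leave the completeness of this spectral invariant unproven --- but your conclusion that the problem is ``genuinely open'' understates the situation. You should not be attempting a proof of the converse at all; the appropriate response is to note that the statement is a conjecture, that the paper offers no proof, and that a counterexample is reported in the literature the paper cites.
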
 
\noindent Recently, a counter example of the graph having a large number of vertices 
are suggested by a combination of theoretical and numerical method~\cite{GGM}. 
However finding the class of strongly regular graphs which conserves the conjecture is still an interesting open problem. 

Therefore taking together with the above two motivations of the positive support of the Grover walk, 
we can naturally extend the Ihara zeta function as follows: 
	\[ \zeta^{(n)}_G(u)=\det(1-uS(U^n))^{-1}. \]
There are structure theorems for $S(U^2)$ and $S(U^3)$ as follows: 
\begin{theorem}Structure theorem for $S(U^2)$ and $S(U^3)$
\begin{enumerate}
\item (\cite{GG})
If $G$ is a graph without leaves, then 
	\[ S(U^2)=I+S(U)^2; \]
\item (\cite{HKSS}) additionally, if the girth is greater than $4$; $g(G)>4$, and it is $k$-regular, then  
	\[ S(U^3)=S(U)^3+{}^TS(U). \]
Here the girth $G$ is the smallest length of cycle of $G$.
\end{enumerate}
\end{theorem}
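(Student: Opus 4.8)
Since a graph without leaves has $\deg(u)\ge 2$ everywhere, the Grover coin at $u$ has all off-diagonal entries $\tfrac{2}{\deg(u)}>0$ and all diagonal entries $\tfrac{2}{\deg(u)}-1\le 0$; hence $(S(U))_{a,b}=1$ precisely when $t(b)=o(a)$ and $b\ne\bar a$, i.e. $S(U)=B$ is the non-backtracking edge-adjacency matrix of $G$, and ${}^{T}S(U)={}^{T}B$ is the same matrix with arcs reversed. The plan is to expand $(U^{m})_{a,c}=\sum (U)_{a,b_1}(U)_{b_1,b_2}\cdots (U)_{b_{m-1},c}$ for $m=2,3$, read each nonzero summand as the sequence of $m+1$ arcs $c=d_0,d_1,\dots ,d_m=a$ of a walk in $G$ (so $t(d_i)=o(d_{i+1})$), note that its coefficient is a product $\prod_{j=1}^{m}\big(\tfrac{2}{\deg}-\delta_j\big)$ of one factor per internal junction, with $\delta_j=1$ exactly at junctions where the walk backtracks, and then group the walks by their backtracking pattern and evaluate the sign of each group's signed sum. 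The sign of such a term is $(-1)^{\ell}$, where $\ell$ is the number of backtracks, provided every junction vertex has degree $\ge3$, and the term vanishes at a degree-$2$ junction; the girth hypothesis in (2) will make almost every backtracking pattern geometrically impossible.

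For $m=2$ there is, in a simple graph, at most one admissible middle arc $b$ --- the arc from $t(c)$ to $o(a)$ --- and two junctions (the multigraph case is analogous). If $c,b,a$ is non-backtracking the coefficient is $\tfrac{4}{\deg(o(a))\deg(t(c))}>0$, giving the $B^{2}$ term; if it backtracks at exactly one junction then $a\ne c$ and the coefficient is $\le 0$, consistent with $B^{2}$ omitting that walk; if it backtracks at both junctions then $c=a$ is forced and the coefficient is $\big(\tfrac{2}{\deg(o(a))}-1\big)\big(\tfrac{2}{\deg(t(a))}-1\big)\ge 0$, strictly positive once both endpoint degrees exceed $2$ --- this is the $I$ term and the only place the leaf hypothesis is genuinely used (a degree-$2$ vertex makes that entry $0$, so ``without leaves'' must be read as excluding vertices of degree $\le 2$, or such vertices are handled separately as in \cite{GG}). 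Collecting the cases gives $S(U^{2})=I+B^{2}$.

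For $m=3$ the hypothesis $g(G)>4$ is the engine: with no $3$- or $4$-cycle, every walk on at most four arcs lies in a tree, so there is at most one non-backtracking walk on four arcs between two prescribed arcs, no closed walk on three edges, and no closed non-backtracking walk on four edges. I would fix the local tree-type of the pair $(a,c)$ and, for each type, run through the eight backtracking patterns of $c,b_2,b_1,a$, discard the impossible ones, and sum the rest. The outcome to verify is a clean disjunction: $(U^{3})_{a,c}>0$ holds in exactly two mutually exclusive situations --- either a non-backtracking walk on four arcs runs from $c$ to $a$, in which case the all-transmission pattern is the unique surviving contributor with value $(2/k)^{3}>0$ (the $B^{3}$ term); or $t(a)=o(c)$ with $a\ne\bar c$, in which case the only survivor backtracks at the first and third junctions, with value $(2/k-1)^{2}(2/k)>0$ (the ${}^{T}B$ term) --- and in every other type the surviving contributions are all negative or none exist, so $(U^{3})_{a,c}\le 0$. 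This gives $S(U^{3})=B^{3}+{}^{T}B$. One can then recast the whole count in the language the paper favours: the radial part of these tree walks is a quantum walk on a line (or half-line) with a $k$-dependent coin, and the sign discussion above is precisely the parity of its $2$- and $3$-step amplitudes.

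The hard part is the bookkeeping in the $m=3$ step: showing the list of tree-types of $(a,c)$ is exhaustive, classifying each of the eight patterns correctly within each type, and --- the real crux --- proving that a positive contribution never coexists with a negative one, nor with a positive one of the other kind, for the same $(a,c)$; that disjointness is exactly what makes the $0$--$1$ matrix $B^{3}+{}^{T}B$ come out. It is precisely $g(G)>4$ that kills the dangerous overlaps --- for example a $4$-cycle would place a closed non-backtracking $4$-arc walk on a pair $(a,c)$ that also satisfies $t(a)=o(c)$, $a\ne\bar c$, so that $B^{3}+{}^{T}B$ would pick up an entry $2$ --- and it is also worth isolating where $k\ge 3$ rather than $k\ge 2$ is genuinely needed, since the identity in (2) already fails for the $2$-regular cycle.
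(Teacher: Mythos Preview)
Your direct walk-expansion is a genuinely different route from what the paper does. The paper does not prove this theorem independently --- it cites \cite{GG} and \cite{HKSS} --- but it re-derives both identities in Section~2 as the $n=2,3$ instances of Theorem~\ref{mainThm}: one runs the one-dimensional discriminant quantum walk for two or three steps, reads the signs of the entries of $\phi_2,\phi_3$, and translates via the dictionary~(\ref{oneone}). The underlying justification is Section~3: radial symmetry on the $k$-regular tree (Lemmas~\ref{lem2}--\ref{lem3}) collapses the Grover walk to the discriminant walk on $\mathbb{Z}$, and the girth hypothesis (Lemma~\ref{tree}) makes the graph locally a tree to depth $n$. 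Your approach, by contrast, is the head-on combinatorics closer in spirit to the original \cite{GG,HKSS}: expand $(U^m)_{a,c}$ over $(m{+}1)$-arc walks, tag each by its backtracking pattern, and track signs. This is more elementary and, for $m=2$, more general (no regularity needed, only the leaf condition), but for $m=3$ you rightly flag that the eight-pattern-by-tree-type bookkeeping is where the real work lies and stop short of executing it --- in particular the cases $a=\bar c$ and $t(a)=t(c)$ each require summing several terms and checking the total is $\le 0$, not merely that each summand is. The paper's machinery trades that casework for a one-time setup and then a mechanical three-step computation of a two-state walk, which is what lets it scale to arbitrary $n$. Your closing remark about the radial recasting is exactly the bridge between the two viewpoints. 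The caution you raise about degree-$2$ vertices is real and worth keeping: the factor $2/k-1$ vanishes there, so the diagonal of $S(U^2)$ and the ${}^TS(U)$ contribution to $S(U^3)$ genuinely degenerate on $2$-regular pieces, and the identity in (2) does fail for cycles.
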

\noindent Moreover a beautiful structure theorem of $S(U^3)$ in~\cite{Guo} for the strongly regular graph is obtained. 
In this paper, we consider the structure theorem of $S(U^n)$ for general $n$. 
The graph in our setting should have a large girth $g(G)>2(n-1)$ with the degree's regularity which includes the setting of \cite{HKSS}. 

In our main theorem, $S(U^n)$ is expressed by a linear combination of $S(U)^k$, $JS(U)^k$, ${}^TS(U)^k$ and $J\;{}^TS(U)^k$ $(k=0,\dots,n)$, 
where $J:\ell^2(A)\to \ell^2(A)$ such that $(J\psi)(a)=\psi(\bar{a})$, 
${}^TM$ is the transpose of $M$. 
The advantage point of this expression is that we can expressed the spectrum of $S(U^n)$ 
by using the spectrum of the adjacency matrix of $G$. 
Thus we can see how the spectrum of $S(U^n)$ is lifted up to the complex plane from the spectrum of the adjacency matrix on the real line. 
See Theorem~\ref{thm:eigenorbit} and Fig.~\ref{Fig.4}, we obtain the support of the non-trivial zero's of $1/(z^{|A|}\zeta^{(n)}_G(z^{-1}))$. 
On the other hand, the negative point is that $\sigma(S(U^n))$ with $g(G)>2(n-1)$ cannot determine the graph isomorphism 
since there are graphs which are not isomorphism but cospectral of the positive support due to this ``advantage" point. 
As is suggested by the appearance of the Hadamard product in \cite{Guo}, only the spectrum of the adjacency matrix cannot determine $S(U^n)$ in general.
However we believe that our main theorem brings a new study motivation of quantum walks investigating the phase observation $\nu$;
note that the operation taking support is converted to the phase observation problem, that is, 
\segawa{letting $\psi_n(a)$ be the $n$-th iteration of the Grover walk at $a$ starting from $b$, that is, $\psi_n(a)=(U^n\delta_b)(a)$, 
then we have $S(U^n))_{a,b}=1 \Leftrightarrow \nu(\psi_n)(a)=0$.
%	\[ (S(U^n))_{a,b} =  \begin{cases} 1 & \text{: $\nu(\psi_n)(a)=0$, }\\ 0 & \text{: $\nu(\psi_n)(a)=\pi$.} \end{cases} \]
}

We show that in our setting, this phase observation problem can be switched to solving 
the phase pattern $\{\nu_n\}_{n\in \mathbb{N}}$ of the discriminant quantum walk on the one-dimensional lattice defined below, 
which is another quantum walk model. 
As we will see, this phase pattern informs us an exact expression for the structure theorem
and seems to have a curious regularity (see Figs.~\ref{Fig.2}--\ref{Fig.3}). 
However a complete decode of this pattern still remains as one of the interesting open problems induced by the operation of the positive support. 
The following is the definition of the discriminant quantum walk: 
\begin{definition}\label{DesQW}Discriminant quantum walk. Let $k\geq 2$ be a natural number. 
\begin{enumerate}
\item Hilbert space: $\ell^2(\mathbb{Z};\mathbb{R}^2)$
\item Time evolution: 
We assign the coin operator at each $x\in \mathbb{Z}$ depending on the positive or negative side. 
\[ H_m(x)=\begin{cases}
	\begin{bmatrix}   2\sqrt{k-1}/k & -1+2/k \\ 1-2/k  & 2\sqrt{k-1}/k  \end{bmatrix} & \text{: $x\geq 0$} \\
        \\
        \begin{bmatrix}   2\sqrt{k-1}/k & 1-2/k \\ -1+2/k  & 2\sqrt{k-1}/k  \end{bmatrix} & \text{: $x<0$}
        \end{cases}
\]
The time evolution of the discriminant quantum walk $W$ is described as $\phi_{n+1}=W\phi_n$ 
with the initial state $\phi_0(x)=\delta_0(x)|R\rangle$ such that 
	\[ (W\phi)(x)=P(x+1)\phi(x+1)+Q(x-1)\phi(x-1), \]
where $P(x)=|L\rangle\langle L| H_m(x)$, $Q(x)=|R\rangle\langle R| H_m(x)$.
Here $|L\rangle={}^T[1,0]$, $|R\rangle={}^T[0,1]$ and $\langle R|={}^*|L\rangle$, $\langle R|={}^*|R\rangle$. 
\item Phase measure: Letting $\phi\in \ell^2(\mathbb{Z};\mathbb{R}^2)$, we put $\phi(x):= {}^T[\phi(x;L),\phi(x;R)]\in \mathbb{R}^2$. 
Then the phase measure $\nu:\ell^2(\mathbb{Z};\mathbb{R}^2)\to \{0,\pi,\emptyset\}^{\mathbb{Z}\times \{L,R\}}$ is the operation to pick up the 
argument of $\phi\in \ell^2(\mathbb{Z};\mathbb{R}^2)$, that is,  
	\[ (\nu(\phi))(x;N) = \begin{cases}  0 & \text{: $\phi(x;N)>0$,}\\ \emptyset & \text{: $\phi(x;N)=0$,}\\ \pi &  \text{: $\phi(x;N)< 0$.} \end{cases}\;\;(N\in \{L,R\})\]
\end{enumerate}
\end{definition}
Now we are ready to give our main theorem as follows: 
\begin{theorem}\label{mainThm}
Let $\phi_n\in \ell^2(\mathbb{Z};\mathbb{C}^2)$ and $\nu:\ell^2(\mathbb{Z};\mathbb{C}^2)\to \{0,\pi,\emptyset\}^{\mathbb{Z}\times \{L,R\}}$ be the above. 
Under the assumption $g(G)>2(n-1)$, and the regularity $k$, we have 
	\begin{equation}\label{mainThmEq1}
        S(U^n)=\sum_{j=0}^n \left(\epsilon_j S(U)^j +\tau_j JS(U)^j\right)
        	+ \sum_{j=1}^{n-1} \left(\epsilon_{-j} {}^TS(U)^j +\tau_{-j} J\;{}^T(S(U)^j)\right).
        \end{equation}
Here $\epsilon_j, \tau_j\in \{0,1\}$ is determined by 
	\begin{align}
        \epsilon_j=
        \begin{cases}
        1 & \text{: $(\nu(\phi_n))(j;R)=0$} \\
        0 & \text{: otherwise}
        \end{cases}, \;
        \tau_j=
        \begin{cases}
        1 & \text{: $(\nu(\phi_n))(j-1;L)=0$} \\
  	0 & \text{: otherwise.}
        \end{cases}
        \end{align}
\end{theorem}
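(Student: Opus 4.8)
The plan is to use the hypothesis $g(G)>2(n-1)$ to localise the entries of $U^n$ to the $k$-regular tree, then to reduce the tree dynamics by its rotational symmetry to the one-dimensional discriminant walk $W$, and finally to translate the sign pattern of $\phi_n$ back into the combinatorics of $S(U)^j$ and its transposes and reflections.

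\textbf{Step 1: localisation to the tree.} I would start from the expansion $(U^n)_{a,b}=\sum U_{a,c_{n-1}}U_{c_{n-1},c_{n-2}}\cdots U_{c_1,b}$, a signed sum over arc-walks $b=c_0,c_1,\dots,c_n=a$, each weighted by a product of transmission factors $2/k$ (at steps with $c_{i+1}\ne\bar c_i$) and reflection factors $2/k-1$ (at backtracking steps $c_{i+1}=\bar c_i$). Every such walk stays inside the ball of radius $n-1$ about $o(b)$, which is a tree once $g(G)>2(n-1)$; hence $(U^n)_{a,b}$ coincides with the corresponding quantity for the Grover walk on the infinite $k$-regular tree $T_k$, with $a$ and $b$ placed in the prescribed relative position. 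So it suffices to determine, for the walk started at $\delta_b$ on $T_k$, the set of arcs $a$ with $(U^n\delta_b)(a)>0$. The same girth bound also makes $S(U)^j$, $JS(U)^j$, ${}^TS(U)^j$ and $J\,{}^TS(U)^j$ have $0/1$ entries for $0\le j\le n$ and makes the arc-families they single out pairwise disjoint, so that matching these families against the positive sites of $\phi_n$ will rebuild $S(U^n)$ with no cancellation.

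\textbf{Step 2: radial reduction to the discriminant walk.} Next I would record an arc-walk on $T_k$ by its underlying vertex-walk and observe that $(U^n\delta_b)(a)$ depends on $a$ only through the orbit of $a$ under the stabiliser of the arc $b$ in $\mathrm{Aut}(T_k)$; such an orbit is fixed by the signed distance of $a$ from the edge $|b|$ — the sign distinguishing the two subtrees cut off by $|b|$ — together with the orientation of $a$ relative to that edge. Averaging the dynamics over these orbits produces an invariant subspace on which $U$ acts as a quantum walk on $\mathbb Z$ with a two-dimensional coin; the $(k-1)$-fold outward branching at each vertex enters only through the factor $\sqrt{k-1}$, and after renormalising the orbit-sums by the appropriate positive powers of $\sqrt{k-1}$ this walk becomes exactly the discriminant walk $W$ of Definition~\ref{DesQW} — the coin on the two half-lines carrying the two sign choices of $H_m$ according to $\mathrm{sgn}(x)$, and the initial state $\delta_b$ going to $\phi_0(x)=\delta_0(x)|R\rangle$. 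Because the renormalisation is a positive diagonal conjugation, signs are preserved, so $(U^n\delta_b)(a)$ is a strictly positive multiple of $\phi_n\big(x(a);N(a)\big)$ for an explicit site $(x(a),N(a))$ attached to $a$; hence
\[ (S(U^n))_{a,b}=1 \ \Longleftrightarrow\ \phi_n\big(x(a);N(a)\big)>0 \ \Longleftrightarrow\ (\nu(\phi_n))\big(x(a);N(a)\big)=0 . \]

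\textbf{Step 3: the dictionary and conclusion.} Unwinding $S(U)_{a,b}=1\iff t(b)=o(a)$ and $b\ne\bar a$, the relation $(S(U)^j)_{a,b}=1$ says that $b,\dots,a$ is a non-backtracking arc-path of length $j$, i.e. $a$ sits $j$ edges forward of $b$ with the same orientation; similarly $JS(U)^j$, ${}^TS(U)^j$, $J\,{}^TS(U)^j$ pick out the arcs $j$ edges forward but reversed, $j$ edges behind and aligned, and $j$ edges behind and reversed. Tracking distances and orientations through the identification of Step~2, these four families correspond, respectively, to the discriminant-walk sites $(j;R)$, $(j-1;L)$, and two families of sites on the negative half-line, which the walk can reach only after spending one extra step to turn around the edge $|b|$ — hence only for $j\le n-1$, which is exactly the range of the second sum in \eqref{mainThmEq1}. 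Collecting all contributions and taking the positive support then gives \eqref{mainThmEq1}, with $\epsilon_j=1\iff(\nu(\phi_n))(j;R)=0$, $\tau_j=1\iff(\nu(\phi_n))(j-1;L)=0$, and $\epsilon_{-j},\tau_{-j}$ read off the corresponding negative sites.

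\textbf{The main obstacle.} The delicate point is Step~2: proving that the radial part of the Grover walk on $T_k$ is \emph{precisely} the walk $W$ of Definition~\ref{DesQW} — in particular that the reduced coin acquires the sign flip exactly as $x$ crosses $0$, and that the leftover gauge really is a \emph{positive} diagonal conjugation (so that no sign is lost when passing from $U$ to $W$). Once this is nailed down, Step~3 is a finite but careful bookkeeping: which orbit of arcs sits at which site $(x;N)$, the shift $j\mapsto j-1$ for the reversed-forward family, and the placement of the two ``behind'' families on the negative half-line with the correct chiralities. I would also check the small cases $n=2,3$ against the known structure theorems for $S(U^2)$ and $S(U^3)$ as a consistency test.
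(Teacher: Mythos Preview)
Your proposal is correct and follows essentially the same route as the paper: the paper first treats the $k$-regular tree by decomposing the arc set into the level sets $A_j^R,A_j^L$ relative to the fixed arc (your stabiliser orbits), proves constancy of $\psi_n$ on each level set, shows by direct computation that the orbit-averaged amplitudes evolve exactly by $W$ with the positive normalisation $1/\sqrt{|A_j^N|}$ preserving signs, and then uses the girth hypothesis to transfer from $G$ to the local tree $\mathbb{T}_e^{(n)}$. The only differences are cosmetic --- you localise to the tree first and phrase the symmetry reduction via $\mathrm{Aut}(T_k)$ rather than via explicit level sets --- and your ``main obstacle'' is precisely the content of the paper's Lemma~3.3 and Remark~3.1.
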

This paper is organised as follows. 
We provide the phase pattern of the discriminant quantum walk in section~2, which is useful to obtain the exact form of RHS of Theorem~\ref{mainThm} 
for each $n$. We put $\nu_n:=\nu(\phi_n)$. The sequence of $\{\nu_n\}_n$ seems to depict a kind of interesting regular pattern when we see it up to large time step. 
Section~3 is devoted to show the proof of our main theorem. 
Finally we give the spectral orbit of $S(U^n)$ with respect to the adjacency matrix of $G$ for $g(G)>2(n-1)$ and the degree $k$. 
%%%%
\section{Demonstration of the phase pattern}
%%%%
In this section, we provide the phase pattern of the discriminant quantum walk. 
By Theorem~\ref{mainThm}, it is convenient to give the following one-to-one correspondence between 
$\{(j;R),(j+L) \;|\; j\in \mathbb{Z}\}$ and $\{ S(U)^j, {}^T(S(U)^j), JS(U)^j, J\;{}^T(S(U)^j) \;|\; j\in \mathbb{Z}_{\geq 0} \}$
\begin{align} 
(j;R) & \leftrightarrow \begin{cases} S(U)^j & \text{: $j\geq 0$}\\ {}^T(S(U)^{|j|}) & \text{: $j<0$} \end{cases} \notag\\
(j;L) & \leftrightarrow \begin{cases} JS(U)^{j+1} & \text{: $j\geq 0$}\\ J\;{}^T(S(U)^{|j+1|}) & \text{: $j<0$} \end{cases}\label{oneone}
\end{align}
For $n=1$, computing $\phi_1=W\phi_0$, we have $\phi_1(1)=Q_+|R\rangle$ and $\phi_1(-1)=P_+|R\rangle$, that is, 
	\[ \phi_1(j)=\begin{cases}  {}^T[ 0\;\; 2\sqrt{k-1}/k ] & \text{: $j=1$;}\\ 
                                    {}^T[ -1+2/k\;\; 0] & \text{: $j=-1$;} \\ 
                                    0  & \text{: otherwise,} 
        	     \end{cases}\]
where we put $Q_+:=Q(x)$, $P_+:=P(x)$ for $x\geq 0$ and $Q_-:=Q(x)$, $P_-:=P(x)$ for $x<0$. 
Thus the phase measurement results are 
	\begin{align*} 
        (\nu(\phi_1))(j;R) = \begin{cases} 0 & \text{: $j=1$} \\ \emptyset & \text{: otherwise} \end{cases},\;\;
        (\nu(\phi_1))(j;L) = \begin{cases} \pi & \text{: $j=-1$} \\ \emptyset & \text{: otherwise} \end{cases}
        \end{align*}
Then using the relation (\ref{oneone}), since $(1;R)$ is the unique arc such that the phase value is $0$, then we have the 
trivial equation $S(U)=S(U)$. 
In the same way, for $n=2$, since $\phi_2(-2)=P_{-}P_{+}|R\rangle$, $\phi_2(2)=Q_{+}^2|R\rangle$ and $\phi_2(0)=(Q_{-}P_{+} + P_{+}Q_{+})|R\rangle$, 
examining the phases and using the relation (\ref{oneone}), we have 
	\[ S(U^2)=I+S(U)^2. \]
We put $\nu_n:=\nu(\phi_n)$. Now our interest is the sequence of $\{\nu_n\}_{n\in \mathbb{N}}$. 
Figure~\ref{Fig.1} depicts the $\nu_n$'s up to $n=4$, that is, $\{\nu_1,\nu_2,\nu_3,\nu_4\}$. 
Each cell corresponds to $(n,\; (j;N))$ with $n\in \mathbb{N}$, $j\in \mathbb{Z}$ and $N\in \{L,R\}$. 
If $\nu_n(j;N)=0$, that is, $\epsilon_j=1$ for $N=R$ and $\tau_j=1$ for $N=L$, then the corresponding cell color is black, otherwise the color is white. 
Referring the pattern of Fig.~1, we can easily obtain the structure theorem for $n=3$ and $n=4$: 
	\begin{align}
        S(U^3) &= {}^TS(U)+S(U)^3, \label{SU3}\\
        S(U^4) &= {}^TS(U)^2 + I + S(U)^4. \label{SU4}
        \end{align}
After $n=5$, a condition analysis arises with respect to the degree $k$ for example, 
	\begin{align}
        S(U^5) &=
  \begin{cases}
  {}^TS(U)^3 +{}^TS(U)+S(U)+S(U)^5 & \text{: $3\leq k\leq 6$,} \\
  {}^TS(U)^3 +J\;{}^TS(U)^2+{}^TS(U)+S(U)+JS(U)^2+S(U)^5 & \text{: $k \geq 7$.}
  \end{cases} \label{SU5}\\
  \\
  	S(U^6) &=
  \begin{cases}
  {}^TS(U)^4+{}^TS(U)^2+I+S(U)^2+S(U)^6 & \text{: $k=3,4$,}\\
  {}^TS(U)^4+J\;{}^TS(U)^3+{}^TS(U)^2+I+S(U)^2+JS(U)^3+S(U)^6 & \text{: $5\leq k\leq 11$,}\\
  {}^TS(U)^4+J\;{}^TS(U)^3+I+S(U)^2+JS(U)^3+S(U)^6 & \text{: $12\geq k$.}\\
  \end{cases}\label{SU6}
        \end{align}
Figures~\ref{Fig.2}(a)--\ref{Fig.2}(c) are the phase patterns of $\nu_n$'s for $k=20$ up to $n=10,20$ and $100$, respectively. 
According to the result on the phase pattern, 
we can divide $\Xi:=\{ (n,x)\in \mathbb{N}\times \mathbb{Z} \;|\; |x| \leq 2n \}$ plane into three regions $(A)$, $(B)$ and $(C)$: 
there exists $0<c<1$ such that 
\begin{enumerate}
\item Region $(A)$: around the origin;
\item Region $(B)$: $\{ (n,x)\in \Xi \;|\; 0 < |x|/n \leq 2c \}$;
\item Region $(C)$: $\{ (n,x)\in \Xi \;|\; 2c < |x|/n \leq 2 \}$. 
\end{enumerate}
In Region $(A)$, we can observe a kind of check pattern, and in Region $(C)$, there is some regularity while complex pattern appears in Region $(B)$. 
The value $c$ seems to be $2\sqrt{k-1}/k$ which is the diagonal part of the discriminant quantum walk's coin. 
See Figures~\ref{Fig.3}(a)--\ref{Fig.3}(c) for $k=20$, $k=10$ and $k=3$ cases until $n=500$. 
We can show the check pattern of Region $(A)$ in the forth coming paper~\cite{EMS} using the fact 
that the localization factor of the Grover walk is the infinite energy flow of the given graph~\cite{HS}. 
We believe that a mathematical formulation of this phase pattern and also rigorous proof of each pattern are candidate of 
future's interesting problems. 
%Fig.1 
\begin{figure}[htbp]
  \begin{center}
              \includegraphics[clip, width=10cm]{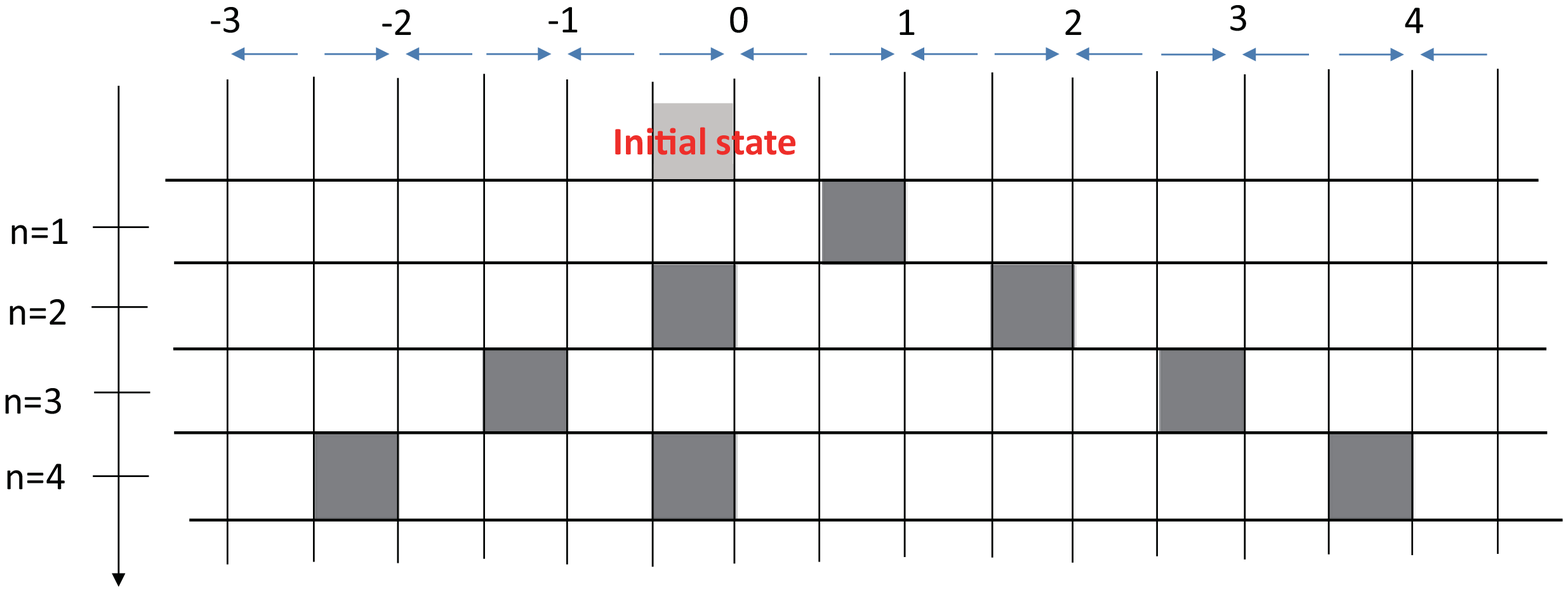}
              \caption{The phase pattern up to $n=4$ for $k=20$: Each cell corresponds to $(n,\; (j;N))$ with $n\in \mathbb{N}$, $j\in \mathbb{Z}$ and $N\in \{L,R\}$. 
	      If $\nu_n(j;N)=0$, that is, $\epsilon_j=1$ for $N=R$ and $\tau_j=1$ for $N=L$, then the corresponding cell color is black, otherwise the color is white. }
              \label{Fig.1}
  \end{center}
\end{figure}
%Fig.2 (a)(b)(c)
\begin{figure}[htbp]
  \begin{center}
    \begin{tabular}{cc}

      % 1
      \begin{minipage}{0.5\hsize}
        \begin{center}
          \includegraphics[clip, width=8cm]{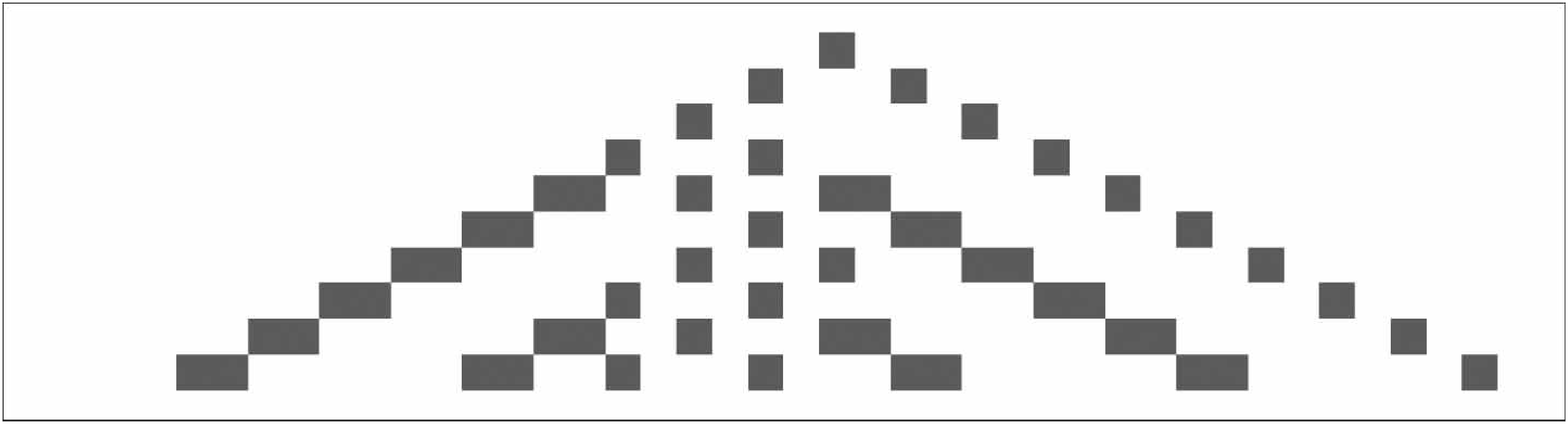}
          \hspace{1.6cm} (a)The phase pattern up to $n=10$
        \end{center}
      \end{minipage}
&
      % 2
      \begin{minipage}{0.5\hsize}
        \begin{center}
          \includegraphics[clip, width=8cm]{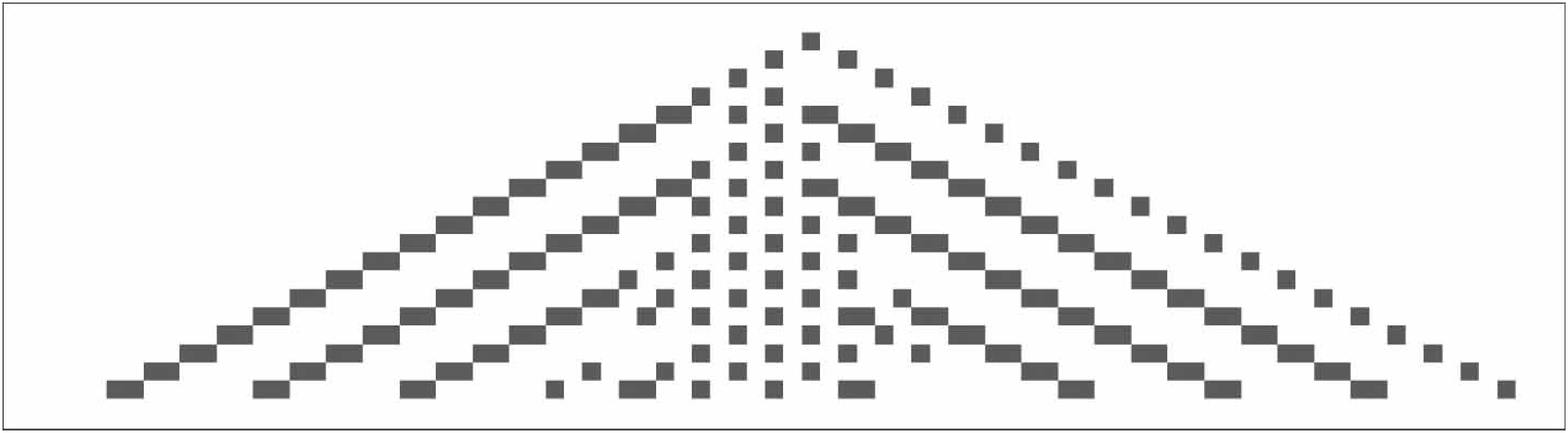}
          \hspace{1.6cm} (b)The phase pattern up to $n=20$
        \end{center}
      \end{minipage}
\\
\\
      % 3
      \begin{minipage}{0.5\hsize}
        \begin{center}
          \includegraphics[clip, width=8cm]{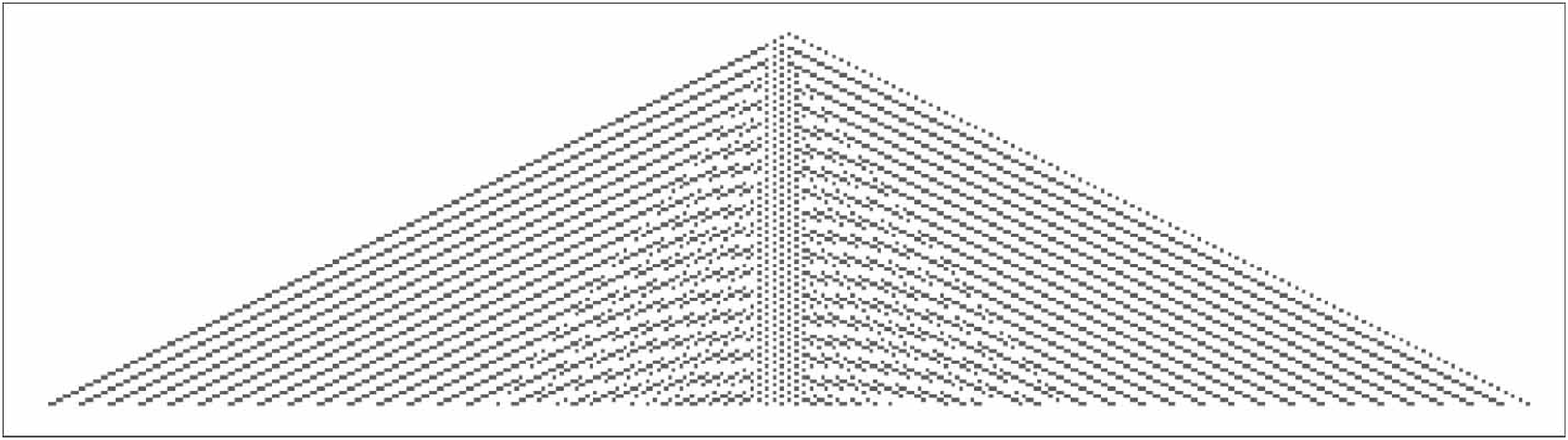}
          \hspace{1.6cm} (c)The phase pattern up to $n=100$
        \end{center}
      \end{minipage}

    \end{tabular}
    \caption{ Figures (a),(b) and (c) are the phase patterns up to $n=10$, $n=20$ and $n=100$, respectively for $k=20$ case}
    \label{Fig.2}
  \end{center}
\end{figure}
%
%Fig.2 (a)(b)(c)
\begin{figure}[htbp]
  \begin{center}
    \begin{tabular}{cc}

      % 1
      \begin{minipage}{0.5\hsize}
        \begin{center}
          \includegraphics[clip, width=8cm]{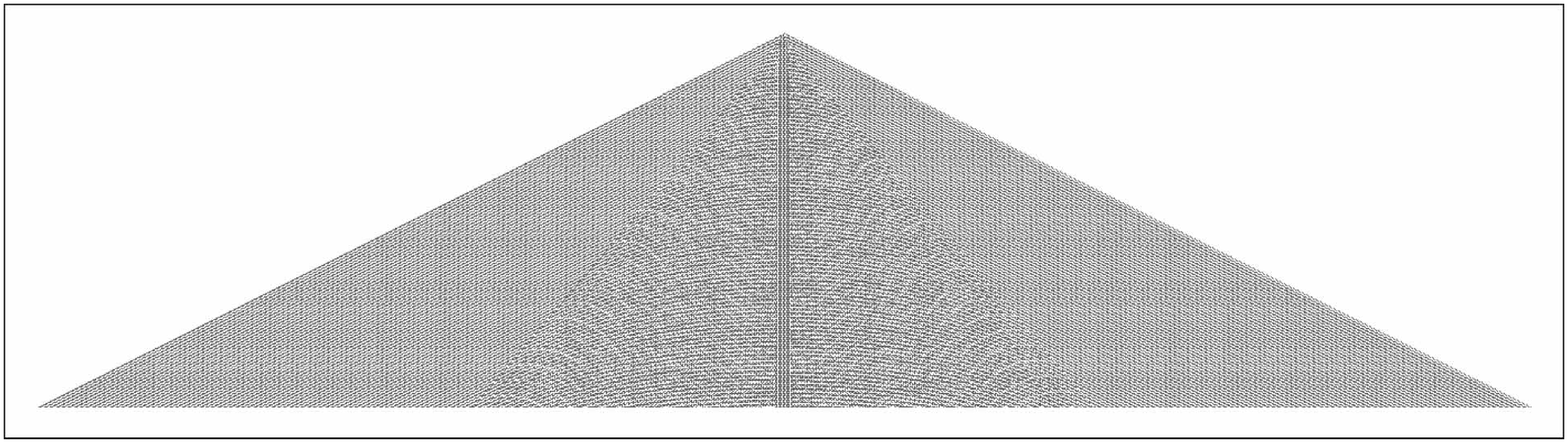}
          \hspace{1.6cm} The phase pattern for $k=20$
        \end{center}
      \end{minipage}
&
      % 2
      \begin{minipage}{0.5\hsize}
        \begin{center}
          \includegraphics[clip, width=8cm]{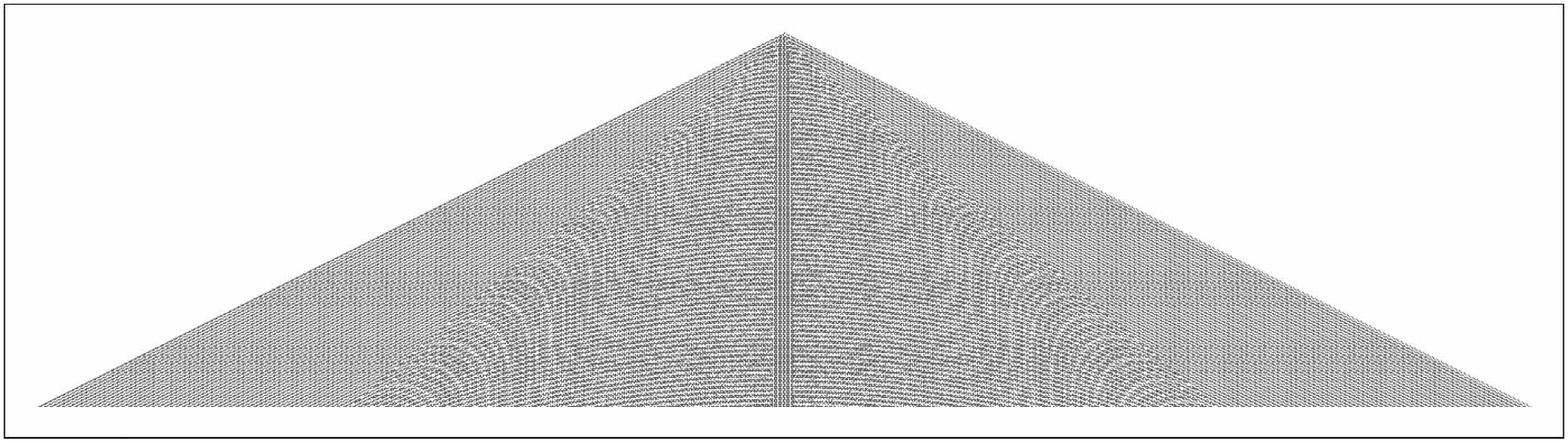}
          \hspace{1.6cm} The phase pattern for $k=10$
        \end{center}
      \end{minipage}
\\
\\
      % 3
      \begin{minipage}{0.5\hsize}
        \begin{center}
          \includegraphics[clip, width=8cm]{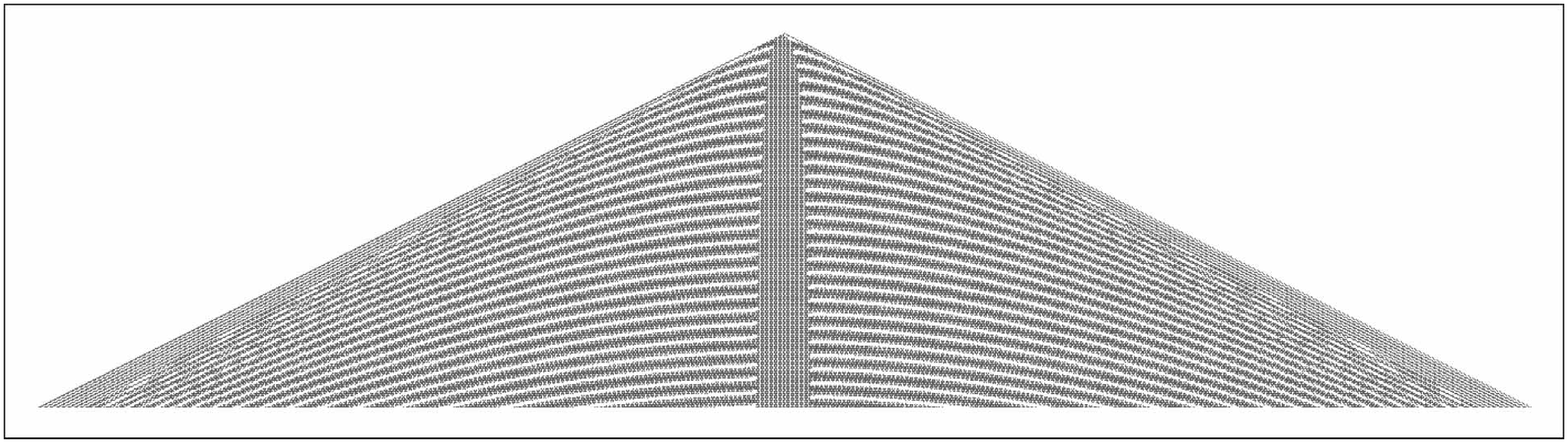}
          \hspace{1.6cm} The phase pattern for $k=3$
        \end{center}
      \end{minipage}

    \end{tabular}
    \caption{ Figures (a),(b) and (c) are the phase patterns Upton $n=500$, and whose degrees are $k=3$, $k=10$ and $k=20$, respectively. 
    The value $c$ which determines the boundary of the region $B$ and $C$ increases when $k$ approaches to a small value. 
    The value $c$ seems to be $2\sqrt{k-1}/k$ which is the diagonal part of the discriminant quantum walk's coin. }
    \label{Fig.3}
  \end{center}
\end{figure}
%%%%%%%%%%%%%%%%%%%%%%%%
%%%%%%%%%%%%%%%%%%%%%%%%
\section{Proof of main theorem} 
First, we prepare notations for the proof of the theorem. 
Secondly, we give the proof for $k$-regular tree case.  
After this consideration, we address to the proof for $k$-regular graph with $g(G)>2(n-1)$ by using the local tree structure. 
%%%%%%
\subsection{Preliminary for the proof of main theorem}
%%%%%%
For given $G=(V,E)$, 
let $A=A(G)$ be the set of symmetric arcs induced by the edge set $E$. 
%The inverse arc of $a\in A$ is denoted by $\bar{a}$ and the terminal and original vertices of $a$ are denoted by $t(a)$ and $o(a)$, respectively. 
%For any $a\in A$, $|a|$ is the support edge of $a$, thus $|\bar{a}|=|a|\in E$. 
\segawa{
%the degree of $u\in V$ is defined by $\mathrm{deg}(u)=|\{a\in A \;|\; t(a)=u\}|$. 
A sequence of arcs $(a_1,\dots,a_{r})$ in $G$ satisfying $t(a_1)=o(a_2),\dots,t(a_{r-1})=o(a_{r})$ is called a $r$-length path or simply a path. 
The length of a shortest path from $u$ to $v$ is denoted by $\mathrm{dist}(u,v)$. 
If $t(a_r)=o(a_1)$, then this is called a $r$-length closed path. 
A back track of a closed path $(b_0,\dots,b_{s-1})$ is a subsequence $(b_j,b_{j+1})$ with $b_j=\bar{b}_{j+1}$, where $j\in\mathbb{Z}/s\mathbb{Z}$. 
If there are no back tracks in a closed path, the closed path is called a cycle. 
If there are no cycles in $G$ and $G$ is connected, then $G$ is called a tree. 
When $G$ is a tree, then the depth of the tree is the minimum hight, 
where the hight of the tree with a root $o\in V$ is defined by the largest distance from $o$. 
The $m$-level set of the tree with the root $o$ is the set of all the vertices whose distance from $o$ is $m$. 
The girth denoted by $g(G)$ is the length of a shortest cycle in $G$; 
when $G$ is a tree, we define $g(G)=\infty$. }
%We describe $G=(V,A)$ which is the pair of the vertex set $V$ and the symmetric arc set $A$. 

\segawa{
We define a distance of pair of arcs $(a,b)$ as follows: 
	\[ \mathrm{dist}(a,b):=\min\{ \mathrm{dist}(t(a),o(b)),\;\mathrm{dist}(t(a),t(b)),\;\mathrm{dist}(o(a),t(b)),\;\mathrm{dist}(o(a),o(b)) \}. \]
}
Every pair of arcs $(a,b)$ has at least one of the following positional relations (see also Fig.~\ref{positional}): \\
For $|a|\neq |b|$, 
	\begin{enumerate}
        \item\label{to} ${}^\exists j_1\in \mathbb{N}\cup\{0\}$ such that \segawa{$\mathrm{dist}(a,b)=j_1$ with } $t(a) \stackrel{j_1}{\sim} o(b)$; 
        \item\label{tt} ${}^\exists j_2\in \mathbb{N}\cup\{0\}$ such that \segawa{$\mathrm{dist}(a,b)=j_2$ with } $t(a) \stackrel{j_2}{\sim} t(b)$; 
        \item\label{ot} ${}^\exists j_3\in \mathbb{N}\cup\{0\}$ such that \segawa{$\mathrm{dist}(a,b)=j_3$ with } $o(a) \stackrel{j_3}{\sim} t(b)$;
        \item\label{oo} ${}^\exists j_4\in \mathbb{N}\cup\{0\}$ such that \segawa{$\mathrm{dist}(a,b)=j_4$ with } $o(a) \stackrel{j_4}{\sim} o(b)$;
        \end{enumerate}
and for $|a|=|b|$, (5) $a=\bar{b}$; (6) $a=b$. \\
Here for $u,v\in V$, $u\stackrel{j}{\sim}v$ denotes $\mathrm{dist}(u,v)=j$.
For cases (\ref{to})--(\ref{oo}), since ${}^TS(U)=JS(U)J$, it holds 
	\[ (S(U)^{\segawa{j_1+1}})_{b,a}>0,\;(JS(U)^{j_2+1})_{b,a}>0,\;({}^TS(U)^{j_3+1})_{b,a}>0,\;(J\;{}^TS(U)^{\segawa{j_4+1}})_{b,a}>0,   \]
\segawa{for cases (5) and (6), $(J)_{b,a}>0$, $(I)_{b,a}>0$, respectively. }
\segawa{We should remark that there is a possibility that $(S(U)^{j_1+1})_{b,a}>0$ and $(JS(U)^{j_2+1})_{b,a}>0$, simultaneously 
since there might be a cycle which accomplishes both positional relations (1) and (2). }
\begin{figure}[htbp]
  \begin{center}
              \includegraphics[clip, width=5cm]{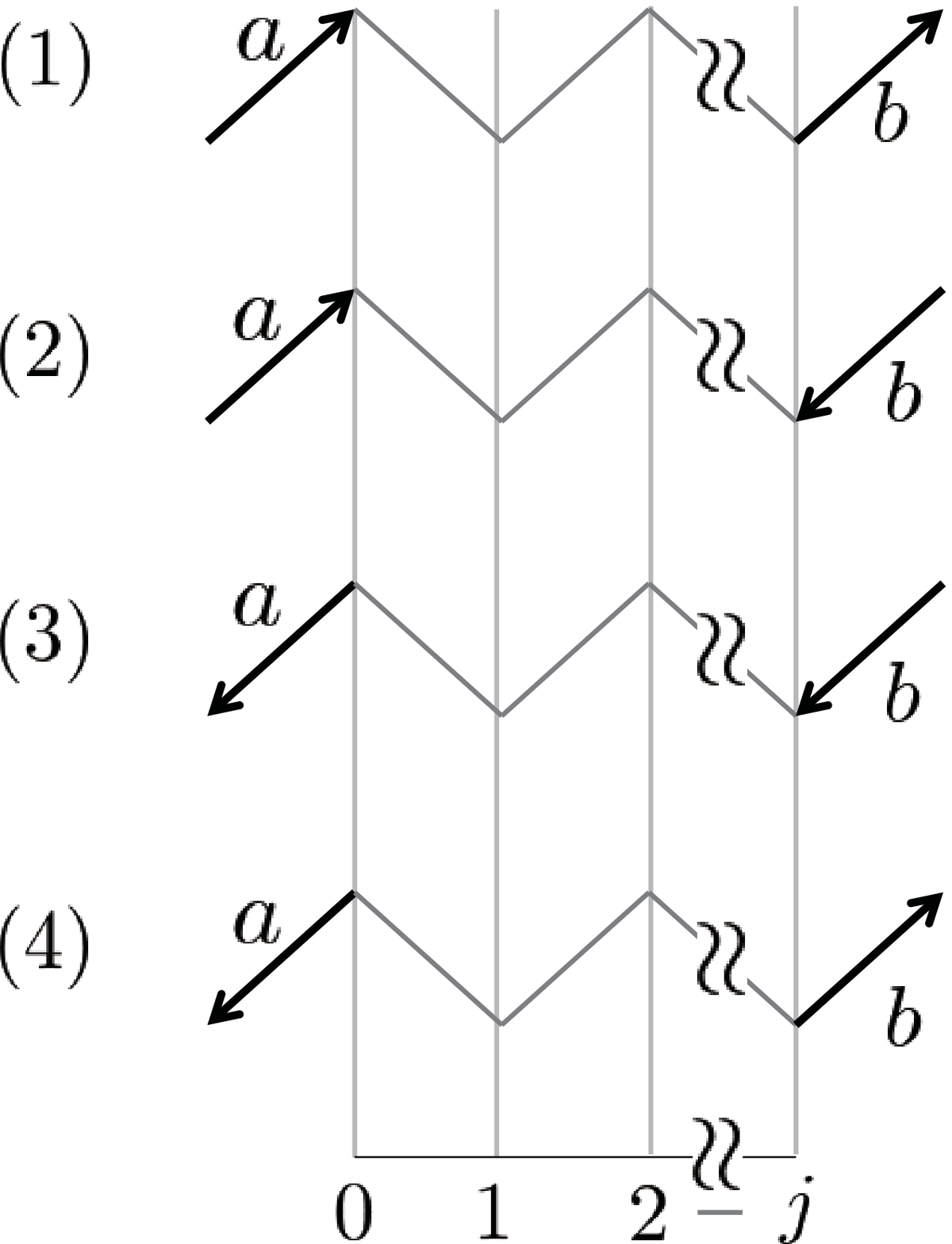}
              \caption{The positional relations of cases (1)--(4)}
  \label{positional}
  \end{center}
\end{figure}
\subsection{$k$-regular tree case}
If the graph is a tree, the positional relation of arbitrary pair of arcs $(a,b)$ is uniquely applicable to one of the cases (1)--\segawa{(6)}: 
if not, a cycle would appear in the tree. 
In the same reason, if the graph is a tree, it holds 
\begin{align*} 
(S(U)^{j_1})_{b,a}>0\Leftrightarrow (S(U)^{j_1})_{b,a}=1, & \; {}^\forall j\neq j_1,\; (S(U)^{j})_{b,a}=0 \\
(JS(U)^{j_2})_{b,a}>0\Leftrightarrow (JS(U)^{j_2})_{b,a}=1, & \; {}^\forall j\neq j_2,\; (JS(U)^{j})_{b,a}=0 \\
(J\;{}^TS(U)^{j_3})_{b,a}>0\Leftrightarrow (J\;{}^TS(U)^{j_3})_{b,a}=1, & \; {}^\forall j\neq j_3,\; (J\;{}^TS(U)^{j})_{b,a}=0 \\
({}^TS(U)^{j_4})_{b,a}>0\Leftrightarrow ({}^TS(U)^{j_4})_{b,a}=1. & \;{}^\forall j\neq j_4,\; ({}^TS(U)^{j})_{b,a}=0
\end{align*}
We summarize the above observation as follows. 
\begin{lemma}\label{lem1}
Assume that the graph is a tree. 
Let $\Xi$ be the following set of matrices 
	\[ \{ S(U)^j,\;JS(U)^{j+1},\;J\;{}^TS(U)^{j+1},\;{}^TS(U)^j \;|\; j\in \mathbb{N}\cup\{0\} \}. \]
Then we have 
 \segawa{
 \begin{align*}
 (a,b) \mathrm{\;is\; the\;} & \mathrm{positional\; relation\;} (1) \\ 
                            & \Leftrightarrow {}^\exists j\;s.t.,\; (S(U)^{j+1})_{b,a}=1, \;{}^\forall \xi\in \Xi\setminus \{S(U)^{j+1}\} ,\; (\xi)_{b,a}=0; \\
 (a,b) \mathrm{\;is\; the\;} & \mathrm{positional\; relation\;} (2) \\
                            & \Leftrightarrow {}^\exists j\;s.t.,\; (JS(U)^{j+1})_{b,a}=1, \;{}^\forall \xi \in \Xi\setminus \{JS(U)^{j+1}\},\; (\xi)_{b,a}=0; \\
 (a,b) \mathrm{\;is\; the\;} & \mathrm{positional\; relation\;} (3) \\
                            & \Leftrightarrow {}^\exists j\;s.t.,\; ({}^TS(U)^{j+1})_{b,a}=1,\;{}^\forall \xi \in \Xi\setminus \{{}^TS(U)^{j+1}\},\; (\xi)_{b,a}=0; \\
 (a,b) \mathrm{\;is\; the\;} & \mathrm{positional\; relation\;} (4)  \\
                            & \Leftrightarrow {}^\exists j\;s.t.,\; (J\;{}^TS(U)^{j+1})_{b,a}=1, \;{}^\forall \xi \in \Xi\setminus\{J{}^TS(U)^{j+1}\},\; (\xi)_{b,a}=0; \\
 (a,b) \mathrm{\;is\; the\;} & \mathrm{positional\; relation\;} (5)  \\
                            & \Leftrightarrow (J)_{b,a}=1, \;{}^\forall \xi \in \Xi\setminus\{J\},\; (\xi)_{b,a}=0; \\
 (a,b) \mathrm{\;is\; the\;} & \mathrm{positional\; relation\;} (6)  \\
                            & \Leftrightarrow (I)_{b,a}=1, \;{}^\forall \xi \in \Xi\setminus\{I\},\; (\xi)_{b,a}=0. 
 \end{align*}}
\end{lemma}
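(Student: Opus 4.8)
The plan is to establish the claimed equivalences by combining three ingredients that have already been set up: (i) the trichotomy/six-case classification of positional relations of a pair of arcs in any graph, together with the positivity statements $(S(U)^{j_1+1})_{b,a}>0$, $(JS(U)^{j_2+1})_{b,a}>0$, $({}^TS(U)^{j_3+1})_{b,a}>0$, $(J\,{}^TS(U)^{j_4+1})_{b,a}>0$, $(J)_{b,a}>0$, $(I)_{b,a}>0$ attached to cases (1)--(6); (ii) the observation that in a tree each pair $(a,b)$ falls into \emph{exactly one} of the six cases, since two distinct positional relations holding simultaneously would force two distinct paths between the relevant endpoints, hence a cycle, contradicting $g(G)=\infty$; and (iii) the combinatorial fact that in a tree the entries of these support matrices are $0$--$1$ valued with a single nonzero entry along each ``direction'', which is exactly the block of displayed equivalences $(S(U)^{j})_{b,a}>0\Leftrightarrow(S(U)^{j})_{b,a}=1$, etc., immediately preceding the lemma.

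The key steps, in order: First, I would fix a pair of arcs $(a,b)$ and invoke the tree hypothesis to select the unique case among (1)--(6) that applies, say case (1) with the associated integer $j_1$; this produces $(S(U)^{j_1+1})_{b,a}>0$, and by ingredient (iii) this sharpens to $(S(U)^{j_1+1})_{b,a}=1$. Second, I would argue that \emph{no other} member of $\Xi$ has a positive $(b,a)$ entry: if some $\xi\in\Xi$ had $(\xi)_{b,a}>0$, then reading off which of the four families $\xi$ belongs to and invoking the positivity dictionary in reverse would force $(a,b)$ to also realize the corresponding positional relation among (1)--(4) (or (5), (6)), contradicting uniqueness in the tree; the only subtlety is when $\xi=S(U)^{j+1}$ with $j\neq j_1$, which is excluded by the ``$\forall j\neq j_1$'' half of the displayed equivalences (the single-nonzero-entry property). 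Third, I would verify the converse direction of each bullet: if $(S(U)^{j+1})_{b,a}=1$ while every other $\xi\in\Xi$ vanishes at $(b,a)$, then since every pair of arcs realizes \emph{at least one} of (1)--(6) and each such relation would contribute a positive entry to the correspondingly-named matrix in $\Xi$, the only relation consistent with the hypothesis is positional relation (1) with that value of $j$. The remaining five bullets follow by the identical argument with the roles of the matrices permuted, using ${}^TS(U)=JS(U)J$ to translate between the ``$t$'' and ``$o$'' families if desired.

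The main obstacle is bookkeeping rather than depth: one must be careful that the four matrix families $S(U)^j$, $JS(U)^{j+1}$, $J\,{}^TS(U)^{j+1}$, ${}^TS(U)^j$ genuinely correspond bijectively to the four positional relations (1)--(4) with the correct index shifts (note the $+1$ in the exponents reflecting that $S(U)$ itself already moves one step, so a distance-$j$ relation corresponds to the $(j+1)$-st power), and that cases (5) and (6) are handled by $J$ and $I$ (the ``$0$-th powers''). The one genuine point that needs the tree hypothesis in an essential way — not merely $g(G)>2(n-1)$ — is the \emph{simultaneous} exclusion: on a general graph, as the excerpt warns, a single pair could satisfy both (1) and (2) via a cycle, so both $(S(U)^{j_1+1})_{b,a}>0$ and $(JS(U)^{j_2+1})_{b,a}>0$ could hold; it is precisely the acyclicity that collapses the six cases to a clean partition and makes each matrix in $\Xi$ an ``indicator'' of its own case. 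Once that is clear, each equivalence is just a restatement of ``unique case $\Leftrightarrow$ unique nonzero entry among the family $\Xi$.''
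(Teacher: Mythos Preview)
Your proposal is correct and follows essentially the same approach as the paper. The paper presents Lemma~\ref{lem1} as a direct summary of the two observations immediately preceding it---uniqueness of the positional relation in a tree (else a cycle appears) and the $0$--$1$ property of the entries---without a separate proof block; your write-up simply unpacks those observations into the forward and converse directions with the bookkeeping made explicit.
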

Let $\mathbb{T}$ be the $k$-regular tree (which is an infinite graph). 
%We can observe the following four statements under this graph setting. 
%\segawa{(i)} If two pairs of arcs of $\mathbb{T}$; $(a,b)$ and $(c,d)$, belong to the same positional relation in the meaning of (1)--\segawa{(6)}, 
%then there uniquely exists an element $\xi\in \Xi$, such that $(\xi)_{b,a}=(\xi)_{d,c}=1$ and for any $\xi'\in \Xi\setminus\{\xi\}$, 
%$(\xi')_{b,a}=(\xi')_{d,c}=0$ by Lemma~\ref{lem1}. 
%\segawa{
%(ii) Due to the symmetricity of the k-regular tree, which is an infinite graph, 
%there exists an automorphism of $\mathbb{T}$ mapping $a'\mapsto b'$ for arbitraly pair of arcs $a'$ and $b'$.
%(iii) In general, every automorphism $g$ conserves the distance, that is, 
%$\mathrm{dist}(u,v)=\mathrm{dist}(g(u),g(v))$. 
%(iv) Because of (ii) and (iii), such $(a,b)$ and $(c,d)$ in (i) move to each other by some $g'\in \mathrm{Aut}(\mathbb{T})$ 
%so that $g'(a)=c$, $g'(b)=d$. 
%Together with the above observations (i)--(iv), we obtain 
%	\begin{equation}
%        \{ (a,b) \;|\; (\xi)_{b,a}=1 \} = \bigcup_{g\in \mathrm{Aut}(\mathbb{T})}\{ (g(e),g(f)) \;|\; (\xi)_{f,e}=1 \}. 
%        \end{equation}
%for any $\xi\in \Xi$ and arbitraly fixed $e$. 
%}
\segawa{From now on we fix an arbitrary fixed arc $e$ as the initial arc and consider $S(U^n)\delta_e$, 
where $\delta_e$ is the delta function on $e$, that is, 
\[ \delta_e(f)=\begin{cases} 1 & \text{: $e=f$,}\\ 0 & \text{: $e\neq f$.} \end{cases} \]
For the $k$-regular tree, we formally describe $(\xi)_{b,a}:=\langle \delta_b, \xi \delta_a \rangle_{\ell^2(A(\mathbb{T}))}$
for $\xi\in \Xi$. }
We take an isometric deformation of $\mathbb{T}$ so that we can keep the symmetricity with respect to this fixed arc $e$: 
we ``unbend" the tree by moving all the \segawa{descendants} of $o(e)$ to the opposite side as follows: 
we decompose the vertices $V(\mathbb{T})$ into ($\dots,V_{-1},V_0,V_1,\dots$), where for $j\geq 1$, 
	\begin{align*}
        V_0 &= \{t(e)\}, \\
        V_{-j} &= \{ u \;|\; \mathrm{dist}(t(e),u)=j,\;\mathrm{dist}(o(e),u)=j-1 \},\\
        V_j &= \{ u \;|\; \mathrm{dist}(t(e),u)=j \} \setminus V_{-j} 
        \end{align*}
We also decompose the arc set $A(\mathbb{T})$ into ($\dots,A_{-1}^{R},A_0^{R},A_1^{R},\dots$), 
($\dots,A_{-1}^L,A_0^L,A_{1}^L,\dots$), where
	\begin{align*}
        A_j^R &= \{ a \;|\; t(a)\in V_j,\;o(a)\in V_{j-1} \}; \\
        A_j^L &= \{ a \;|\; t(a)\in V_j,\;o(a)\in V_{j+1} \}. 
        \end{align*}
\segawa{If both $a$ and $b$ belong to $A_j^{N}$, then $(a,e)$ and $(b,e)$ are the same positional relation. 
By Lemma~\ref{lem1}, such $\xi\in \Xi$ with $(\xi)_{a,e}=(\xi)_{b,e}=1$ is uniquely determined and represented by }
	\begin{align}\label{oneone1}
        \xi=
        	\begin{cases}  
                S(U)^j & \text{: $j\geq 0,\;N=R$;}\\
                JS(U)^{j+1} & \text{: $j\geq 0,\;N=L$;}\\
                {}^TS(U)^{\segawa{|j|}} & \text{: $j<0,\;N=R$;}\\
                J\;{}^TS(U)^{\segawa{|j+1|}} & \text{: $j< 0,\;N=L$.}
         	\end{cases} 
        \end{align}
%This one-to-one correspondence between $\mathbb{Z}\times\{L,R\}$ and $\Xi$ will be quite important to make the connection to the discriminant quantum walk.
Figure~\ref{one-one} depicts a simple chart of this one-to-one correspondence between $\mathbb{Z}\times\{L,R\}$ and $\Xi$. 
\begin{figure}[htbp]
  \begin{center}
              \includegraphics[clip, width=10cm]{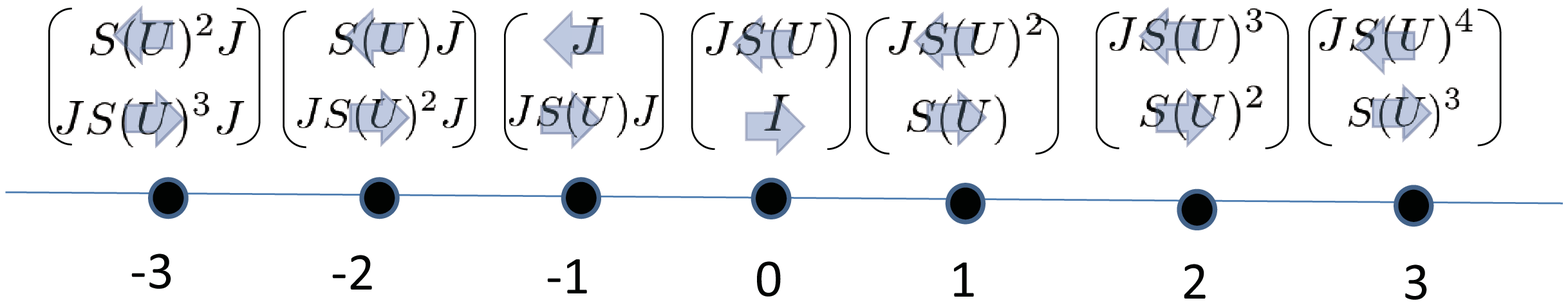}
              \caption{The one-to-one correspondence between $\mathbb{Z}\times\{L,R\}$ and $\Xi$: The right directed arrow on the position $j\in\mathbb{Z}$ 
              depicts $(j;R)$, that is, $A_j^R$,  and 
              left directed arrow on $j$ depicts $(j;L)$, that is, $A_j^L$. 
              The corresponding element of $\Xi$ to each arrow overlaps in this figure, e.g., 
              the arrow $(2;L)$ corresponds to $JS(U)^3$. }
              \label{one-one}
  \end{center}
\end{figure}
\segawa{
We obtained that 
if $a,b\in A_n^{N}$, then 
     $(S(U)^m)_{a,e}=(S(U)^m)_{b,e}$, 
    $(JS(U)^m)_{a,e}=(JS(U)^m)_{b,e}$, 
 $({}^TS(U)^m)_{a,e}=({}^TS(U)^m)_{b,e}$ and 
$(J\;{}^TS(U)^m)_{a,e}=(J\;{}^TS(U)^m)_{b,e}$ for every $m$. 
The following lemma provides that the amplitudes of the Grover walk itself $(U^m)_{a,e}$ and $(U^m)_{b,e}$ for $a,b\in A_j^{N}$
take also the same value. }
\begin{lemma}\label{lem2}
Assume $G$ is the $k$-regular tree. 
Let $\psi_n = U^n\delta_{e}$. 
Then 
	\[ \psi_n(a)=\psi_n(b) \] 
holds for any $a,b\in A_j^{N}$ with $j\in \mathbb{Z}$ and $N\in \{L,R\}$. 
\end{lemma}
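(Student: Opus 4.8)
The plan is to prove Lemma~\ref{lem2} by induction on $n$, exploiting the symmetry of the $k$-regular tree after the isometric "unbending" described above. The key structural observation is that the deformation makes the tree symmetric with respect to the fixed arc $e$: any graph automorphism of $\mathbb{T}$ fixing $e$ permutes the vertices within each level set $V_j$ (and, more to the point, permutes the arcs within each $A_j^N$) transitively in the relevant sense. Since the Grover walk's time evolution $U$ is defined purely in terms of the combinatorial/metric structure of the tree (the coin at each vertex depends only on its degree, which is uniformly $k$, and the shift depends only on incidence of arcs), $U$ commutes with the permutation operator induced by any such automorphism. Because $\delta_e$ is invariant under automorphisms fixing $e$, so is $\psi_n = U^n\delta_e$, and hence $\psi_n$ is constant on each orbit — in particular on each $A_j^N$.

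More concretely, and without invoking automorphisms as a black box, I would set up the induction directly on the amplitudes. First I would establish the base case $n=0$: $\psi_0 = \delta_e$ takes the value $1$ on $e \in A_1^R$ (after the labeling) and $0$ elsewhere, which is trivially constant on each $A_j^N$. For the inductive step, suppose $\psi_n$ is constant on every $A_j^N$, say $\psi_n \equiv c_n(j,N)$ there. I would then compute $(U\psi_n)(a)$ for $a \in A_j^N$ using the definition $(U\psi)(a) = \sum_{b: t(b)=o(a)} \left(\frac{2}{k} - \delta_{b,\bar a}\right)\psi_n(b)$. The crucial point is to identify, for a generic arc $a \in A_j^N$, which level-class $A_{j'}^{N'}$ each of the $k$ incoming arcs $b$ (those with $t(b) = o(a)$) belongs to: exactly one of them is $\bar a$ itself, and the remaining $k-1$ all lie in a single common class determined by $j$, $N$ (the "next" class moving toward or away from $e$ in the unbent picture). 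Since this incidence pattern depends only on $(j,N)$ and not on the particular arc $a$ within the class, and since by hypothesis $\psi_n$ is constant on each of those classes, the sum defining $(U\psi_n)(a)$ depends only on $(j,N)$. This gives $\psi_{n+1}$ constant on each $A_j^N$, closing the induction.

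The main obstacle — really the only delicate point — is the careful bookkeeping of the incidence structure near the root-level classes $V_0$, $V_{\pm 1}$ and the arcs $A_0^{R}, A_0^L, A_1^R, A_{-1}^L$, where the "unbending" glues the two halves of the tree together and the naive "one step further from $e$" description needs to be checked against the actual definitions of $V_j$ and $A_j^N$. One must verify that for $a \in A_j^N$ the set $\{b : t(b) = o(a)\}$ decomposes as $\{\bar a\}$ together with $k-1$ arcs forming a full class $A_{j'}^{N'}$, and pin down $(j',N')$ as an explicit function of $(j,N)$ — including in the neighborhood of the origin, where e.g. an arc in $A_1^R$ has $o(a) = t(e) \in V_0$, so its incoming arcs are $\bar e \in A_0^L$ together with the $k-1$ arcs of... one needs to check these land in a single class. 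For a $k$-regular tree with $k \geq 2$ this is a finite case check, but it is exactly where an error would hide, so I would write out this incidence map carefully (this is also precisely the data that produces the coin $H_m(x)$ of the discriminant quantum walk, so consistency with Definition~\ref{DesQW} provides a useful sanity check).

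Finally I would remark that once Lemma~\ref{lem2} is in hand, combined with Lemma~\ref{lem1} it lets one pass from the Grover-walk amplitude $\psi_n(a) = (U^n\delta_e)(a)$ to the finite-dimensional data $c_n(j,N)$, and the recursion for $c_n(j,N)$ is, up to the $|L\rangle,|R\rangle$ bookkeeping and the sign conventions in the coin, exactly the recursion $\phi_{n+1} = W\phi_n$ of the discriminant quantum walk — which is how the phase-pattern description of $S(U^n)$ on the tree, and then (via the local tree structure for $g(G) > 2(n-1)$) on the regular graph, will be deduced.
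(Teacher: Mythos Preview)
Your approach---induction on $n$---is exactly what the paper does (its entire proof is the single sentence ``This is easily completed by the induction with respect to the time iteration $n$''), and your logical structure is sound: what matters is that the incidence pattern of $\{b:t(b)=o(a)\}$ depends only on $(j,N)$, not on the particular $a\in A_j^N$.

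That said, your concrete description of this incidence pattern is wrong and would not survive the careful check you yourself flag. For $a\in A_j^R$ with $j\geq 2$, the vertex $o(a)\in V_{j-1}$ has one neighbor in $V_{j-2}$ and $k-1$ neighbors in $V_j$; hence the $k$ incoming arcs are \emph{one} arc in $A_{j-1}^R$ together with $k-1$ arcs in $A_{j-1}^L$ (of which $\bar a$ is one). They do \emph{not} split as $\{\bar a\}$ plus $k-1$ arcs in a single class. The analogous statement holds for the other $(j,N)$. This does not break the induction---the sum $(U\psi_n)(a)=\tfrac{2}{k}\big[(k-1)c_n(j-1,L)+c_n(j-1,R)\big]-c_n(j-1,L)$ still depends only on $(j,N)$---but your stated decomposition would give the wrong recursion, and indeed would fail your own sanity check against Definition~\ref{DesQW}, whose coin mixes both the $L$ and $R$ components at the neighboring site.
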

\begin{proof}
This is easily completed by the induction with respect to the time iteration $n$. 
\end{proof}
Thus by (\ref{oneone1}) and Lemma~\ref{lem2}, $S(U^n)\delta_e$ 
is expressed by a linear combination of $\Xi\segawa{\delta_e:=\{\xi\delta_e \;|\; \xi\in \Xi\}}$ with the $\{0,1\}$-coefficient; 
\segawa{that is, 
	\begin{equation}\label{eq:from_e}
        S(U^n)\delta_e=\sum_{j=0}^n \left(\epsilon_j S(U)^j\delta_e +\tau_j JS(U)^j\delta_e\right)
        	+ \sum_{j=1}^{n-1} \left(\epsilon_{-j} {}^T(S(U)^j)\delta_e +\tau_{-j} J\;{}^T(S(U)^j)\delta_e\right),
        \end{equation} 
where $\epsilon_j,\tau_j\in\{0,1\}$. }
Moreover to obtain the $\{0,1\}$ coefficient of each term of $\Xi$, 
by Lemma~\ref{lem2}, we pick up the parity of only the representative of each $A_j^N$'s. 
To this end more efficiently, we introduce the following map: 
$\Psi: \ell^2(A(\mathbb{T}))\to \ell^2(\mathbb{Z}\times \{L,R\})$ such that 
	\begin{align*}
        (\Psi(\psi))(j;R) = \frac{1}{\sqrt{|A_j^R|}}\sum_{a\in A_j^R}\psi(a),\; (\Psi(\psi))(j;L) = \frac{1}{\sqrt{|A_j^L|}}\sum_{a\in A_j^L}\psi(a). 
        \end{align*}
We will regard $\Psi(\psi_n)(j;N)$ as the ``representative" of $\psi_n(a)$ for $a\in A_j^N$. 
A direct computation provides the following lemma. 
\begin{lemma}\label{lem3}
Let $\Psi$ and $\psi_n$ be the above. Then we have 
	\begin{align}
        (\Psi(\psi_n))(j;R)  &= \begin{cases}
        		\frac{2}{k}\sqrt{k-1}(\Psi(\psi_{n-1}))(j-1;R)-\left(\frac{2}{k}-1\right)(\Psi(\psi_{n-1}))(j-1;L) & \text{: $j\geq 0$,}\\ 
                        \frac{2}{k}\sqrt{k-1}(\Psi(\psi_{n-1}))(j-1;R)+\left(\frac{2}{k}-1\right)(\Psi(\psi_{n-1}))(j-1;L) & \text{: $j<0$,}
                        \end{cases} \\
        (\Psi(\psi_n)(j;L)  &= \begin{cases}
        		 \left(\frac{2}{k}-1\right)(\Psi(\psi_{n-1}))(j+1;R)+\frac{2}{k}\sqrt{k-1}(\Psi(\psi_{n-1}))(j+1;L) & \text{: $j\geq 0$,}\\
                        -\left(\frac{2}{k}-1\right)(\Psi(\psi_{n-1}))(j+1;R)+\frac{2}{k}\sqrt{k-1}(\Psi(\psi_{n-1}))(j+1;L) & \text{: $j< 0$.}
                        \end{cases}
        \end{align}
\end{lemma}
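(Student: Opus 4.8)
The plan is to verify the two stated recursions by a single-step computation of the Grover evolution on representatives of the blocks $A_j^R$ and $A_j^L$, followed by the renormalisation built into $\Psi$. First I would cash in Lemma~\ref{lem2}: since $\psi_m:=U^m\delta_e$ is constant on each $A_j^N$, write $\alpha_j^{(m)}$ for its common value on $A_j^R$ and $\beta_j^{(m)}$ for its common value on $A_j^L$. Then $(\Psi(\psi_m))(j;R)=\sqrt{|A_j^R|}\,\alpha_j^{(m)}$ and $(\Psi(\psi_m))(j;L)=\sqrt{|A_j^L|}\,\beta_j^{(m)}$, so the lemma reduces to a pair of scalar recursions between the $\alpha$'s and $\beta$'s, together with the combinatorial bookkeeping of the block sizes.

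Second I would compute those block sizes from the $(k-1)$-ary branching of the $k$-regular tree $\mathbb{T}$ away from the seam $\{o(e),t(e)\}$: one finds $|A_j^R|=(k-1)^{|j|}$ and $|A_j^L|=(k-1)^{|j+1|}$, exactly matching the exponents in the correspondence (\ref{oneone1}). After renormalisation these will supply the factor $\frac{2}{k}\sqrt{k-1}$ on the diagonal of the recursion and the factor $1$ off the diagonal.

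Third, the heart of the argument: fix a representative arc $a$ and apply $(U\psi)(a)=\frac{2}{k}\sum_{b:t(b)=o(a)}\psi(b)-\psi(\bar a)$. For $a\in A_j^R$ with $j\ge 1$ the arcs $b$ with $t(b)=o(a)$ consist of the single arc of $A_{j-1}^R$ pointing toward the smaller level at $o(a)$, and the $k-1$ arcs of $A_{j-1}^L$ coming from the $k-1$ descendants of $o(a)$, one of which is $\bar a$; substituting and using $\frac{2(k-1)}{k}-1=-(\frac{2}{k}-1)$ gives $\alpha_j^{(n)}=\frac{2}{k}\alpha_{j-1}^{(n-1)}-(\frac{2}{k}-1)\beta_{j-1}^{(n-1)}$. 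The analogous local picture for $a\in A_j^L$ yields the $L$-recursion, and the cases in which the update is taken at a vertex of negative level pick up the opposite sign in front of $\frac{2}{k}-1$, because there the orientation of the ``toward-$e$'' arc relative to $A^R_{\cdot}$ versus $A^L_{\cdot}$ is reversed; this is precisely the origin of the two coins $H_m(x)$ for $x\ge 0$ and $x<0$ in Definition~\ref{DesQW}. Finally I would substitute $\alpha_j^{(m)}=(\Psi(\psi_m))(j;R)/\sqrt{|A_j^R|}$ and $\beta_j^{(m)}=(\Psi(\psi_m))(j;L)/\sqrt{|A_j^L|}$ into the scalar recursions and clear the square roots with the block-size formulas; the ratios collapse to $k-1$ on the diagonal and $1$ off the diagonal, producing the claimed identities.

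The step I expect to be delicate is the local analysis at the seam $j\in\{-1,0\}$: there $t(e)$ and $o(e)$ each have exactly one neighbour across the edge $|e|$ and $k-1$ neighbours on their own side, so which adjacent arc lies in $A^R_{\cdot}$ and which in $A^L_{\cdot}$ — and hence the sign of the $\frac{2}{k}-1$ term and the precise index shift — has to be checked by hand rather than read off the generic $(k-1)$-ary pattern, and it is exactly this boundary case that must be reconciled with the $x\ge 0$ / $x<0$ split of $H_m$. Everything outside the seam is the routine ``direct computation'' the statement promises.
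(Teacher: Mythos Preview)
Your approach is exactly what the paper does: it simply declares ``A direct computation provides the following lemma'' with no further details, and your proposal spells out that computation---use Lemma~\ref{lem2} to reduce to scalar recursions on the common values $\alpha_j^{(m)},\beta_j^{(m)}$, compute the local Grover update at a representative arc, and then absorb the block-size square roots to produce the $\frac{2}{k}\sqrt{k-1}$ diagonal. Your identification of the seam $j\in\{-1,0\}$ as the only non-generic case is also right, and indeed the case split ``$j\ge 0$ versus $j<0$'' in the lemma versus the ``$x\ge 0$ versus $x<0$'' split of $H_m$ in Definition~\ref{DesQW} (applied at $x=j-1$ for the $R$-component) is precisely the bookkeeping one has to reconcile by hand there.
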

Putting $\phi_n(j)={}^T[(\Psi(\psi_n))(j;L)\;\;(\Psi(\psi_n))(j;R) ]$, we obtain the difference equation of the discriminant quantum walk; that is, 
	\begin{align*}
        \phi_0(j) &= \delta_0(j)|R\rangle, \\
        \phi_{n+1}(j) &= P(j+1)\phi_n(j+1)+Q(j-1)\phi_n(j-1),
	\end{align*}
where $P(j)$ and $Q(j)$ are given by Definition~\ref{DesQW}. 
\begin{remark}\label{rem:phase}
Let $\psi_n$ be the $n$-th iteration of the Grover walk on $k$-regular tree starting from $\delta_e$, and  
let $\phi_n$ be its discriminant quantum walk at time $n$. Then 
	\begin{align}
        \psi_n(a) &= 
        	\begin{cases}
                \frac{1}{\sqrt{|A^R_j|}}\langle R, \phi_n(j)\rangle & \text{: $a\in A^R_j$,}\\
                \frac{1}{\sqrt{|A^L_j|}}\langle L, \phi_n(j)\rangle & \text{: $a\in A^L_j$.}
                \end{cases}
        \end{align}
This implies the phase is invariant with respect to $\Psi$. 
\end{remark}
Therefore \segawa{by (\ref{oneone1}) and Remark \ref{rem:phase}, it holds in the $k$-regular tree case that
the coefficients in (\ref{eq:from_e}) are given by
	\begin{equation}\label{eq:epsilontau}
        \langle R,\phi_n(j)\rangle>0 \Leftrightarrow \epsilon_j=1,\;
        \langle L,\phi_n(j)\rangle>0 \Leftrightarrow \tau_{j-1}=1. 
	\end{equation}
Since $e\in A(\mathbb{T})$ can be chosen as an arbitrary arc due to the symmetricity of the $k$-regular tree, 
the statement of (\ref{eq:epsilontau}) also holds if we choose another initial arc $f\in A(\mathbb{T})\setminus\{e\}$ and consider $S(U^n)\delta_f$
in the same way; 
which implies
\[S(U^n)[\delta_e,\delta_f,\dots]=\left(\sum_{j=0}^n \left(\epsilon_j S(U)^j +\tau_j JS(U)^j\right)
        	+ \sum_{j=1}^{n-1} \left(\epsilon_{-j} {}^T(S(U)^j) +\tau_{-j} J\;{}^T(S(U)^j)\right)\right)[\delta_e,\delta_f,\dots]. \]
Since $[\delta_e,\delta_f,\dots]=I$ with some appropriate computational basis order, the statement of our main theorem is true for the $k$-regular tree. 
}
\subsection{$k$-regular graph with $g(G)>2(n-1)$ case}
Next, we consider a $k$-regular graph whose girth is greater than $2(n-1)$. 
For \segawa{an arbitrary} fixed $e\in A(G)$, we define the directed subgraph $\mathbb{T}_e^{(n)}=(V',A')\subset G$ as follows: 
putting $t(e):=o$, we define $A'$ and $V'$ by 
	\begin{align*}
                A'= A_{in} \cup \partial A, \;\; V'= t(A_{in}),
        \end{align*}
where 	
	\begin{align*} 
        A_{in} &= \{ a\in A \;|\; \mathrm{dist}(o, t(a))\leq n-1,\; \mathrm{dist}(o,o(a))\leq n-2  \} \\
         \partial A & = \{ a\in A \;|\; \mathrm{dist}(o, t(a))=n,\; \mathrm{dist}(o,o(a))=n-1 \};
        \end{align*}
in this setting, we omit the terminal vertices of all $a\in \partial A$. 
\begin{lemma}\label{tree}
Let $G$ be a $k$-regular and $g(G)>2(n-1)$. 
Then $\mathbb{T}_e^{(n)}$ is a depth-$n$ directed tree whose root is $t(e)$. 
\end{lemma}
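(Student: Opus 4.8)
The plan is to argue that the directed subgraph $\mathbb{T}_e^{(n)}$ unrolls $G$ locally into a tree because the girth hypothesis forbids any short closed path, and to check that the root $t(e)$ together with the level sets of $\dist(o,\cdot)$ gives exactly a depth-$n$ rooted tree. First I would set $o:=t(e)$ and recall the decomposition $A'=A_{in}\cup\partial A$, $V'=t(A_{in})$. The key observation is that every arc $a\in A'$ points ``outward'' in the sense that $\dist(o,o(a))=\dist(o,t(a))-1$; this is immediate for $a\in\partial A$ by definition, and for $a\in A_{in}$ I would need to rule out the case $\dist(o,o(a))=\dist(o,t(a))$ and the case $\dist(o,o(a))=\dist(o,t(a))-1$ failing — but the triangle inequality forces $|\dist(o,t(a))-\dist(o,o(a))|\le 1$, so the only alternatives to ``outward'' are ``equal level'' or ``inward'', and I would show the equal-level case cannot occur under $g(G)>2(n-1)$.

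The heart of the argument is the following: suppose a vertex $v\in V'$ with $\dist(o,v)=m\le n-1$ had two distinct arcs $a_1,a_2\in A'$ with $t(a_1)=t(a_2)=v$ and $o(a_1)\ne o(a_2)$ both at level $m-1$, i.e.\ $v$ has two distinct ``parents''. Concatenating a geodesic from $o$ to $o(a_1)$, the arc $a_1$, the reverse of $a_2$, and a geodesic from $o(a_2)$ back to $o$ produces a closed path of length at most $2(m-1)+2=2m\le 2(n-1)$; after removing backtracks this closed path is nonempty because $o(a_1)\ne o(a_2)$ guarantees the two geodesic halves cannot cancel completely (they end at different vertices adjacent to $v$, and $a_1\ne\bar a_2$), yielding a cycle of length $\le 2(n-1)$, contradicting $g(G)>2(n-1)$. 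Hence each vertex of $V'$ other than $o$ has a unique parent arc, which is precisely the statement that the parent relation makes $(V',A')$ a rooted tree with root $o$; an equal-level arc would similarly close up a short cycle, so the ``outward'' property also follows. The depth is exactly $n$: by construction $A_{in}$ reaches vertices up to distance $n-1$ and the boundary arcs $\partial A$ add one more directed level (with terminal vertices omitted, so they are leaves/half-edges), giving height $n$ measured in arcs.

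I expect the main obstacle to be the bookkeeping around $\partial A$: because the terminal vertices of boundary arcs are deliberately discarded, $\mathbb{T}_e^{(n)}$ is not literally an induced subgraph of $G$ and one must be careful that the ``tree'' claim is about the directed structure $A_{in}\cup\partial A$ with $V'=t(A_{in})$, and that no two boundary arcs get identified or create a cycle — again this is controlled by $g(G)>2(n-1)$, since a coincidence among boundary endpoints or between a boundary endpoint and an interior vertex would manifest as a closed path of length $\le 2n-2$ (two geodesics of length $\le n-1$ meeting, minus the shared initial segment). A secondary, purely routine point is verifying that every arc of $A_{in}$ emanating from $o$ is present and that the $k$-regularity gives the expected branching, but that is not needed for the bare assertion ``depth-$n$ directed tree rooted at $t(e)$'' and I would only remark on it. So the proof reduces to: (i) triangle inequality pins arc directions to three cases; (ii) the girth bound kills the equal-level case and the two-parents case; (iii) conclude the parent map is well-defined and single-valued, hence tree; (iv) read off the depth from the definitions of $A_{in}$ and $\partial A$.
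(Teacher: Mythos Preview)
Your proposal is correct and rests on the same core idea as the paper's proof: any cycle in $\mathbb{T}_e^{(n)}$ would have length at most $2(n-1)$, contradicting the girth hypothesis. The paper argues this more tersely---a cycle cannot use arcs of $\partial A$ (their terminal vertices are excluded from $V'$), so it lies entirely within the radius-$(n-1)$ ball about $o$ and hence has length at most $2(n-1)$---whereas your unique-parent formulation is a more explicit structural unpacking of the same bound.
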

\begin{proof}
If there is a cycle in $\mathbb{T}_e^{(n)}$, then the cycle does not pass any arcs in $\partial A$ 
since all the terminal vertices $\partial A$ are vanished. 
The length of a largest cycle in this graph is less than the diameter of $G$, that is $2(n-1)+1$, 
which should run through each level set. 
To accomplish this cycle, we need the arc connecting two vertices in $(n-1)$-level set, but 
such an arc belongs to non available arcs in $\partial A$ by the definition. 
Then the largest length of the cycle is at \segawa{most} $2(n-1)$. 
By the assumption $g(G)>2(n-1)$, then the contradiction occurs. 
Thus there are no cycles. 
\end{proof}
Thus for the $k$-regular graph $G$ whose girth is greater than $2(n-1)$, when we look around from an arbitrary vertex, 
we can regard it as a local $k$-regular tree within $(n-1)$-distance from the vertex. 
For a graph $H$, the time evolution of the Grover walk on the graph $H$ is denoted by $U_H$. 
By Lemma~\ref{tree}, the following statement is immediately obtained. 
\begin{lemma}
Let $G$ be $k$-regular with $g(G)>2(n-1)$ and also let $\mathbb{T}_e^{(n)}$ be the above directed subtree with depth $n$. 
Then for every $e\in A(G)$, and for every $f\in A'$, 
	\[ (U_G^j)_{f,e}=(U_{\mathbb{T}_e^{(n)}}^j)_{f,e}\;\;(j=0,1,\dots,n) \]
\end{lemma}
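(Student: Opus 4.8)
The plan is to use the fact that one application of the Grover walk is a strictly local operation. Written as a matrix, for either of the structures $H\in\{G,\mathbb T_e^{(n)}\}$ one has $(U_H)_{a,b}=\bigl(\tfrac{2}{\deg_H(o(a))}-\delta_{b,\bar a}\bigr)\mathbf{1}[t(b)=o(a)]$, so expanding the $j$-th power gives
\[
(U_H^{\,j})_{f,e}=\sum_{\gamma}\ \prod_{i=1}^{j}\Bigl(\tfrac{2}{\deg_H(t(c_{i-1}))}-\delta_{c_i,\bar c_{i-1}}\Bigr),
\]
the sum running over all arc-walks $\gamma=(e=c_0,c_1,\dots,c_j=f)$ in $H$ with $o(c_i)=t(c_{i-1})$ for $i=1,\dots,j$. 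Each summand depends only on the degrees of the intermediate vertices $t(c_0),\dots,t(c_{j-1})$ in $H$ and on which consecutive pairs form a back-step. Hence it suffices to prove that, for $j\le n$ and $f\in A'$, the index set of this sum is the same for $H=G$ and for $H=\mathbb T_e^{(n)}$, and that matching walks contribute equal summands.

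First I would confine the walks. Put $o:=t(e)$. Along any arc-walk starting at $c_0=e$ one has $t(c_0)=o$ and $t(c_i)$ adjacent to $o(c_i)=t(c_{i-1})$, whence $\dist(o,t(c_i))\le i$ by induction; thus every intermediate vertex $v_i:=o(c_i)=t(c_{i-1})$ obeys $\dist(o,v_i)\le i-1$. When $j\le n$ this forces $v_i\in V'$ for all $i$, and $\dist(o,v_i)\le n-2$ whenever $i\le n-1$. Consequently, for $i\le n-1$ the arc $c_i$ has $\dist(o,o(c_i))\le n-2$ and $\dist(o,t(c_i))\le n-1$, so $c_i\in A_{in}\subseteq A'$, while the last arc $c_n=f$ lies in $A'$ by hypothesis; thus every length-$j$ walk ($j\le n$) from $e$ to $f$ in $G$ is a walk in $\mathbb T_e^{(n)}$. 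Conversely, since $A'\subseteq A(G)$ and the maps $o(\cdot)$, $t(\cdot)$ and arc-reversal restricted to $A_{in}$ agree with those of $G$ — and an arc of $\partial A$ can occur in such a walk only as its final arc, so the suppression of its head in $\mathbb T_e^{(n)}$ is harmless — every length-$j$ walk from $e$ to $f$ in $\mathbb T_e^{(n)}$ is a walk in $G$. The two index sets therefore coincide. The summands also coincide: every intermediate vertex lies at distance $\le n-1$ from $o$ and so is a vertex of $\mathbb T_e^{(n)}$, where the Grover coin is by construction the $k$-dimensional one inherited from the $k$-regular tree, matching $\deg_G=k$ by $k$-regularity; hence each factor $\tfrac{2}{\deg(\cdot)}$ equals $\tfrac2k$ on both sides and $\bar c_{i-1}$ denotes the same arc in both. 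This gives $(U_G^{\,j})_{f,e}=(U_{\mathbb T_e^{(n)}}^{\,j})_{f,e}$ for $j=0,1,\dots,n$.

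I expect the confinement step to be the only point needing care: pinning down exactly which arcs a walk of length $\le n$ issuing from $e$ can visit and checking that they all sit inside $\mathbb T_e^{(n)}$. Note that the hypothesis $f\in A'$ is genuinely used here, since an arc reached from $e$ by travelling out to distance $n-1$ and then one step back in can carry nonzero amplitude yet fail to belong to $A'$. The two hypotheses enter as follows: $k$-regularity is what equates the local degrees, and $g(G)>2(n-1)$ enters through Lemma~\ref{tree}, which guarantees that $\mathbb T_e^{(n)}$ really is a depth-$n$ tree; the latter is not needed for the identity itself but is exactly what will let the subsequent argument carry the analysis of the $k$-regular tree case over to $G$. (For the small values $n=1,2$ the identity degenerates and is in any case subsumed by the direct computations of Section~2.)
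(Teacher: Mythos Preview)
Your proof is correct and somewhat cleaner than the paper's. Both arguments rest on the same locality principle---expand $(U^j)_{f,e}$ as a sum over arc-walks and argue that every such walk stays inside $\mathbb{T}_e^{(n)}$---but the paper splits into cases: for $f\in A_{in}$ the claim is simply called ``trivial'', and for $f\in\partial A$ it observes that reaching $f$ requires at least $n$ steps and then invokes the girth bound $g(G)>2(n-1)$ to show that the length-$n$ path from $e$ to $f$ is unique (two such paths would close into a cycle of length $\le 2(n-1)$). Your confinement argument treats both cases at once, by tracking $\dist(o,t(c_i))\le i$ and deducing $c_i\in A_{in}$ for all $i\le n-1$. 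In particular, you correctly note that the girth hypothesis is not needed for the identity itself; it enters only through the preceding lemma, which certifies that $\mathbb{T}_e^{(n)}$ is a tree and so can be identified with a piece of the infinite $k$-regular tree analysed in Section~3.2. Your presentation makes this logical dependence clearer, at the cost of spelling out what the paper leaves implicit.
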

\begin{proof}
If $f\in A_{in}$, the statement is trivial. We show it for $f\in \partial A$ case. 
To reach $f\in \partial A$ from the initial arc $e$, it takes at least $n$ iterations. 
If there are two shortest paths from $e$ to $f$; $(e,e_1,\dots,e_{n-1},f)$ and $(e,e_1',\dots,e_{n-1}',f)$, 
then since $o(e_1)=o(e_1')=t(e)$ and $t(e_{n-1})=t(e_{n-1}')=o(f)$, the $2(n-1)$-length cycle appears. 
This is contradiction to the assumption $g(G)>2(n-1)$.  
\end{proof}
Therefore when the Grover walk on the $k$-regular graph $G$; whose girth is greater than $2(n-1)$, starts from $\delta_e$, 
we can convert it to the Grover walk on the $k$-regular tree starting from $e$ whose terminus is the root of the tree 
as long as the time iteration is less than $n$. 
\segawa{Since $e\in A(G)$ can be chosen as an arbitrary arc, 
the same statement also holds if we choose another initial arc $e'\in A(G)\setminus\{e\}$ and consider $S(U^n)\delta_{e'}$. }
It is completed the proof. 

\section{Spectrum of $S(U^n)$}
For given $k$-regular graph $G=(V,E)$, we consider the spectrum of $S(U^n)$. 
\segawa{We show the spectrum of $S(U^n)$ is inherited from the spectrum of the underlying graph. 
Although the setting of \cite{MOS}, whose setting was for an extended Szegedy walk, 
is different from our setting, our fundamental analytical method can follow \cite{MOS} 
with some modifications to be able to apply to our setting. 
The operation ``$S$" will remove regularities of the unitary operator $U^n$.
This method enables us to find when $S(U^n)$ loses the diagonalizable property. }

We introduce $K:\ell^2(A)\to \ell^2(V)$ as the following boundary operator corresponding to an incidence matrix with respect to the terminal vertices: 
	\[ K=\sum_{a\in A} |t(a)\rangle\langle a | \]
where $|u\rangle$ and $|a\rangle$ denote the standard basis of $\ell^2(V)$ and $\ell^2(V)$ corresponding to $u\in V$, $a\in A$, respectively.
This boundary operator $K$ plays the important role for our spectral analysis due to the following properties:
\begin{enumerate}
\item $S(U)=J(K^*\;K-I)$; 
\item $KK^*=k \bs{1}_{|V|}$; 
\item $KJK^*=M$, where $M$ is the adjacency matrix of $G$. 
\end{enumerate}  
We put $L: \ell^2(V)\times \ell^2(V)\to \ell^2(A)$ such that $L=[ K^*, \; JK^* ]$. 
Using the above properties of $K^*$ and $J^2=\bs{1}_{|A|}$, we obtain
	\begin{equation} 
        S(U)L=L\tilde{M} 
        \end{equation}
Here $\tilde{M}:\ell^2(V)\times \ell^2(V)\to \ell^2(V)\times \ell^2(V)$ is 
	\[ \tilde{M}=\begin{bmatrix} 0 & -1 \\ k-1 & M \end{bmatrix}. \]
Remarking that ${}^TS(U)=JS(U)J$, we can easily obtain that 
	\begin{align}
        S(U)^jL &= L\tilde{M}^j; \notag \\ 
        JS(U)^jL &= L\tilde{\sigma}_X\tilde{M}^j; \notag \\
        {}^TS(U)^jL &= L\tilde{\sigma}_X\tilde{M}^j\tilde{\sigma}_X; \notag \\
        J\;{}^TS(U)^jL &= L\tilde{M}^j\tilde{\sigma}_X. \label{map}
        \end{align}
Here $\tilde{\sigma}_X=\sigma_X\red{\otimes I_{|V|}}$ with \[ \sigma_X=\begin{bmatrix}0 & 1 \\ 1 & 0\end{bmatrix}. \]
Let $F(x;y)$ be a linear combination of $\{x^j,yx^j,yx^jy,x^jy \;|\; j\in\mathbb{Z}\}$. 
Then by (\ref{map}), 
	\begin{equation}\label{linear}
        F(S(U);J)L = LF(\tilde{M};\tilde{\sigma}_X).
        \end{equation}
Let $\mathcal{L}\subset \ell^2(A)$ be $L(\ell^2(V)\times \ell^2(V))$, that is, $\mathcal{L}=\{K^*f+JK^*g \;|\; f,g\in \ell^2(V)\}$ 
By (\ref{linear}), we have the following equivalent deformation with respect to the eigenequation of $F(S(U);J)$ restricted to $\mathcal{L}$ ;
	\begin{align}\label{lee}
        F(S(U);J)|_{\mathcal{L}}\psi = \lambda \psi,\;\; \psi\neq 0
        	& \Leftrightarrow L(\lambda-F(\tilde{M};\tilde{\sigma}_X))\phi = \red{0},\;\;\phi\notin \ker L
        \end{align}
Thus we need to clarify $\ker L$. Indeed this is expressed as follows. 
\begin{lemma}
	Let $L$ and $\tilde{M}$ be the above. Then we have 
        \[ \ker(L)=\ker(1-\tilde{M}^2)=\ker\left( k+\begin{bmatrix} 0 & M \\ M & 0 \end{bmatrix} \right). \]
\end{lemma}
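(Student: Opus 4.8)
The statement identifies $\ker(L)$ with two successively more explicit kernels, so the plan is to establish the two equalities in turn. First I would show $\ker(L)=\ker(1-\tilde M^2)$. The inclusion $\ker(1-\tilde M^2)\subset\ker(L)$ is the substantive half: if $\phi=(f,g)$ satisfies $\tilde M^2\phi=\phi$, I want to conclude $L\phi=K^*f+JK^*g=0$. The idea is to use the intertwining identity $S(U)L=L\tilde M$ together with the concrete form $S(U)=J(K^*K-I)$ from the list of properties of $K$. Applying $L$ to $\tilde M^2\phi=\phi$ gives $S(U)^2 L\phi = L\phi$, i.e. $L\phi\in\ker(1-S(U)^2)$; but one should instead compute $L^*L$ directly. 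Using $KK^*=k\bs 1_{|V|}$ and $KJK^*=M$, a block computation gives
\[
L^*L=\begin{bmatrix} KK^* & KJK^* \\ KJK^* & KK^* \end{bmatrix}=\begin{bmatrix} k & M \\ M & k \end{bmatrix}=k+\begin{bmatrix} 0 & M \\ M & 0 \end{bmatrix},
\]
and since $\ker(L)=\ker(L^*L)$ for a bounded operator between Hilbert spaces, this immediately yields the second claimed equality $\ker(L)=\ker\!\big(k+\begin{bmatrix}0&M\\M&0\end{bmatrix}\big)$.

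It then remains to reconcile this with $\ker(1-\tilde M^2)$. I would compute $\tilde M^2$ explicitly from $\tilde M=\begin{bmatrix}0&-1\\k-1&M\end{bmatrix}$, obtaining
\[
\tilde M^2=\begin{bmatrix} -(k-1) & -M \\ (k-1)M & M^2-(k-1) \end{bmatrix},
\]
so that
\[
1-\tilde M^2=\begin{bmatrix} k & M \\ -(k-1)M & k-M^2 \end{bmatrix}.
\]
Now I must check that $\ker(1-\tilde M^2)$ coincides with $\ker\!\big(\begin{bmatrix}k&M\\M&k\end{bmatrix}\big)$. If $(f,g)$ lies in the latter, then $kf+Mg=0$ and $Mf+kg=0$; the first row of $(1-\tilde M^2)$ applied to $(f,g)$ is $kf+Mg=0$, and the second row is $-(k-1)Mf+(k-M^2)g = -(k-1)Mf + kg - M^2 g$. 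Using $Mf=-kg$ and $Mg=-kf$ (hence $M^2g = -kMf = k^2 g$), this becomes $(k-1)kg + kg - k^2 g = (k^2-k+k-k^2)g=0$. Conversely, if $(f,g)\in\ker(1-\tilde M^2)$, the first equation already gives $kf+Mg=0$; substituting into the second should force $Mf+kg=0$, and here the only slightly delicate point is that one may need to use that $k\ne 0$ (true since $k\ge 2$) to divide, or argue via $M$ acting on the first relation. This symmetric-vs-nonsymmetric block comparison is the one spot requiring care.

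The step I expect to be the main obstacle is precisely this last equivalence between the two non-symmetric/symmetric kernel descriptions: one must verify that the factor $k-1$ and the coupling term conspire correctly, and in particular handle the possibility that $M$ is not invertible, so that "multiply the first equation by $M$ and compare" arguments need to be phrased without inverting $M$. The clean route is to observe that on $\ker(1-\tilde M^2)$ we have $f = -\tfrac1k Mg$ from the first row (legitimate since $k\ge 2$), plug this into the second row, and simplify using no further assumptions on $M$; the algebra then closes. Everything else — the identity $L^*L=k+\begin{bmatrix}0&M\\M&0\end{bmatrix}$ and the general fact $\ker L=\ker L^*L$ — is routine and follows directly from the three listed properties of the boundary operator $K$.
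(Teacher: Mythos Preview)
Your proposal is correct. The route differs only in packaging from the paper's argument. The paper proves the two inclusions $\ker(L)\subset\ker(1-\tilde M^2)$ and $\ker(L)\supset\ker(1-\tilde M^2)$ directly: for the first it multiplies $K^*f+JK^*g=0$ by $K$ and by $KJ$ to extract $kf+Mg=0$, $Mf+kg=0$ and then checks these kill $1-\tilde M^2$; for the second it starts from the equations coming out of $\ker(1-\tilde M^2)$, rewrites them as $K(K^*f+JK^*g)=0$ and $KJ(K^*f+JK^*g)=0$, and concludes $L\phi\in\mathcal L\cap\mathcal L^\perp=\{0\}$. You instead observe $\ker L=\ker L^*L$ and compute $L^*L=\begin{bmatrix}k&M\\M&k\end{bmatrix}$ outright, which gives the rightmost kernel in one stroke, and then match $\ker(1-\tilde M^2)$ to it by the same block algebra. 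Both arguments pivot on exactly the same pair of equations $kf+Mg=0$, $Mf+kg=0$; your $L^*L$ shortcut is the Gram-matrix version of the paper's $\mathcal L\cap\mathcal L^\perp$ step and is arguably the cleaner way to reach the symmetric block form, while the paper's phrasing makes the inclusion $\ker(1-\tilde M^2)\subset\ker L$ slightly more conceptual. The spot you flagged as delicate---going from the non-symmetric rows of $1-\tilde M^2$ back to the symmetric pair without inverting $M$---is handled correctly by your substitution $f=-\tfrac1k Mg$, and no further obstacle remains.
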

\begin{proof}
First we show $\ker(L)\subset \ker(1-\tilde{M}^2)$. 
Let ${}^T[f.g]\in \ker(L)$. Then $K^*g+JK^*g=0$ holds. By multiplying $K$ and $KJ$, then we have 
	\begin{equation}\label{ker} kf+Mg=0,\;\;Mf+kg=0 \end{equation}
Using this equation, a simple computation leads 
	\[ (1-\tilde{M}^2){}^T[f,g]=0 \]
which implies ${}^T[f,g]\in \ker(1-\tilde{M}^2)$, that is, $\ker(L)\subset \ker(1-\tilde{M}^2)$. 
Next we show $\ker(L)\supset \ker(1-\tilde{M}^2)$. 
We assume ${}^T[f,g]\in \ker(1-\tilde{M}^2)$. 
Then 
	\[ kg+Mg=0,\;Mf+g=0. \]
Remarking $KK^*=k$ and $KJK^*=M$, we can provides the following equivalent expression of the above equations: 
	\[ K(K^*f+JK^*g)=0,\;KJ(K^*f+JK^*g)=0 \]
which are also equivalent to $K^*f+JK^*g\in \mathcal{L}^\perp$. Thus $K^*f+JK^*g$ should be $0$ since $K^*f+JK^*g\in \mathcal{L}$. 
It is completed the proof. 
\end{proof}
Since the adjacency matrix $M$ is a regular matrix, then $M$ can be decomposed into
	\[ M=\sum_{\mu\in \sigma(M)}\mu \Pi_\mu, \]
where $\Pi_\mu$ is the orthogonal projection onto the eigenspace of $\mu$. 
Therefore the equivalent deformations of the eigenequation (\ref{lee}) can be continued to 
	\begin{multline}\label{eq:bunkai}
        (1-\tilde{M}^2)(\lambda-F(\tilde{M};\tilde{\sigma}_X) ) \phi=0,\;\;\phi\notin \ker (1-\tilde{M}^2) \\
        	\Leftrightarrow 
                \left(\sum_{\mu\in \sigma(M)}(1-K_\mu^2)(\lambda-F(K_\mu;\sigma_X))\otimes \Pi_\mu \right) \phi=0,\;\;\phi\notin \ker (1-\tilde{M}^2) 
        \end{multline}
Here $K_\mu$ is the $2$-dimensional matrix defined by 
	\[ K_\mu = \begin{bmatrix} 0 & -1 \\ k-1 & \mu \end{bmatrix}. \]
We are interested in so-called ``non-trivial" zeros of 
	\[ \mathrm{det}(\lambda-S(U^n))=0 \]
not living in the real line, which is a graph analogue of the non-trivial poles of the Riemann zeta function; 
that is the zeros of the Ihara zeta~\cite{Bass1992,Ihara1966,KS2000} motivated by the quantum walks. 
Remark that all the eigenvalues for the eigenspace $\mathcal{L}^\perp$ are included in $\mathbb{R}$ because 
$\mathcal{L}^\perp=\ker(K)\cap \ker(KJ)=\left\{\ker(K)\cap \ker(1-J)\right\}\cap \left\{\ker(K)\cap \ker(1+J)\right\}$. 
Therefore to see the non-trivial eigenvalues,  we can concentrate on the eigenequation:
\begin{theorem}\label{thm:eigenorbit}
Let $\lambda$ be the non-trivial eigenvalue of $S(U^n)$. 
Then the value $\lambda$ satisfies 
	\begin{equation}\label{eigenorbit}
        \det(\lambda-F_n(K_\mu;\sigma_X))=0 
        \end{equation}
Here $F_n$ is determined by (\ref{mainThmEq1}). 
\end{theorem}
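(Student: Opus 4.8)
The plan is to reorganize the reduction already carried out in the excerpt through~(\ref{eq:bunkai}) as a decomposition of $\ell^2(A)$ into $S(U^n)$-invariant subspaces, so that the ``trivial'' (real) eigenvalues — those of $\mathcal{L}^\perp$ together with the ones coming from the extreme adjacency eigenvalues $\mu=\pm k$ — are peeled off and precisely (\ref{eigenorbit}) is left for the non-trivial ones. By Theorem~\ref{mainThm} (and ${}^TS(U)=JS(U)J$) we have $S(U^n)=F_n(S(U);J)$ with $F_n(x;y)=\sum_{j=0}^n(\epsilon_j x^j+\tau_j yx^j)+\sum_{j=1}^{n-1}(\epsilon_{-j}yx^jy+\tau_{-j}x^jy)$. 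Now $\ell^2(A)=\mathcal{L}\oplus\mathcal{L}^\perp$ with $\mathcal{L}^\perp=\ker K\cap\ker(KJ)$, and both summands are invariant under $S(U)$ and under $J$: for $\mathcal{L}$ this is~(\ref{map}), and on $\mathcal{L}^\perp$ one has $S(U)=J(K^*K-I)=-J$, so $S(U)$ and $J$ map $\mathcal{L}^\perp$ into itself. Hence $S(U^n)=F_n(S(U);J)$ respects the splitting, and on $\mathcal{L}^\perp$, where $J$ is a self-adjoint involution and $S(U)=-J$, the spectrum of $S(U^n)$ is contained in $\{F_n(-1;1),F_n(1;-1)\}\subset\mathbb{R}$. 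Consequently every non-real eigenvalue $\lambda$ of $S(U^n)$ already lies in $\sigma(S(U^n)|_{\mathcal{L}})$.

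Next I would split $\mathcal{L}$ over $\sigma(M)$. Writing $M=\sum_{\mu}\mu\Pi_{\mu}$ and identifying $\ell^2(V)\oplus\ell^2(V)\cong\mathbb{C}^2\otimes\ell^2(V)$, put $W_{\mu}:=L(\mathbb{C}^2\otimes\mathrm{Im}\,\Pi_{\mu})$. Since $\tilde{M}$ and $\tilde{\sigma}_X$ preserve $\mathbb{C}^2\otimes\mathrm{Im}\,\Pi_{\mu}$ (where they act as $K_{\mu}\otimes I$ and $\sigma_X\otimes I$), identity~(\ref{map}) shows each $W_{\mu}$ is $S(U)$- and $J$-invariant, hence $S(U^n)$-invariant; and because $\ker L=\ker(1-\tilde{M}^2)=\bigoplus_{\mu}(\ker(1-K_{\mu}^2)\otimes\mathrm{Im}\,\Pi_{\mu})$ is compatible with this grading, $\mathcal{L}=\bigoplus_{\mu}W_{\mu}$ is an internal direct sum. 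For $\mu\neq\pm k$ the $2\times2$ matrix $1-K_{\mu}^2$ is invertible — its kernel is nonzero exactly when $\pm1\in\sigma(K_{\mu})$, i.e. only for $\mu=\pm k$ — so $L$ restricts to an isomorphism $\mathbb{C}^2\otimes\mathrm{Im}\,\Pi_{\mu}\to W_{\mu}$, and by~(\ref{linear}) conjugation of $S(U^n)|_{W_{\mu}}$ by this isomorphism gives $F_n(K_{\mu};\sigma_X)\otimes I$. Therefore $\sigma(S(U^n)|_{W_{\mu}})$ is precisely the set of roots of $\det(\lambda-F_n(K_{\mu};\sigma_X))=0$.

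The remaining step — which I expect to be the main obstacle, because it is what guarantees that~(\ref{eigenorbit}) does not miss any non-trivial $\lambda$ — is to show the extreme sectors $\mu=\pm k$ contribute only real eigenvalues. By connectedness of the $k$-regular graph $G$, $\dim\mathrm{Im}\,\Pi_{\pm k}=1$, so $\dim W_{\pm k}=1$. For $\mu=k$ one has $K^*\bs{1}_V=JK^*\bs{1}_V=\bs{1}_A$ (the all-ones function on $A$), so $W_k=\mathbb{C}\bs{1}_A$, and a short computation with $KK^*=k\,\bs{1}_{|V|}$ gives $S(U)\bs{1}_A=(k-1)\bs{1}_A$ and $J\bs{1}_A=\bs{1}_A$, whence $S(U^n)\bs{1}_A=F_n(k-1;1)\bs{1}_A$. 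For $\mu=-k$ (which arises only if $G$ is bipartite with parts $V_0,V_1$), taking $\chi=\bs{1}_{V_0}-\bs{1}_{V_1}$ one finds $JK^*\chi=-K^*\chi$ and $S(U)K^*\chi=-(k-1)K^*\chi$, so $W_{-k}=\mathbb{C}K^*\chi$ and $S(U^n)K^*\chi=F_n(-(k-1);-1)K^*\chi$; in both cases the eigenvalue is real. Combining the three steps, a non-trivial eigenvalue $\lambda$ of $S(U^n)$ must lie in $\sigma(S(U^n)|_{W_{\mu}})$ for some $\mu\in\sigma(M)$ with $\mu\neq\pm k$, and hence satisfies $\det(\lambda-F_n(K_{\mu};\sigma_X))=0$, which is~(\ref{eigenorbit}). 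What is left is routine bookkeeping: the invariance and direct-sum assertions, that $L$ becomes an isomorphism exactly off $\mu=\pm k$, and the explicit identification of $W_{\pm k}$ — the $\mathcal{L}^\perp$-part and these two sectors together accounting for precisely the real (``trivial'') spectrum.
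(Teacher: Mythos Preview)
Your argument is correct and follows the same skeleton as the paper's: reduce to $\mathcal{L}$ (the paper has already noted just before the theorem that $\mathcal{L}^\perp$ contributes only real eigenvalues), decompose over $\sigma(M)$, observe that $1-K_\mu^2$ is invertible precisely when $\mu\neq\pm k$, and then dispose of the two extreme sectors $\mu=\pm k$ as producing only real eigenvalues.

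Where you differ is in the treatment of $\mu=\pm k$. The paper stays on the $2\times 2$ side: it computes $K_k^j$ in closed form, writes $F(K_k;\sigma_X)=\left[\begin{smallmatrix}a&b\\c&d\end{smallmatrix}\right]$, checks the identity $a+c=b+d$, and then argues that any $\phi$ in $\ker\bigl((1-K_k^2)(\lambda-F(K_k;\sigma_X))\bigr)$ with $\lambda$ non-real already lies in $\ker(1-\tilde{M}^2)=\ker L$, hence does not produce an eigenvector. You instead identify $W_k$ concretely as $\mathbb{C}\,\bs{1}_A$ (and $W_{-k}$ as $\mathbb{C}\,K^*\chi$ in the bipartite case) and read off $S(U)\bs{1}_A=(k-1)\bs{1}_A$, $J\bs{1}_A=\bs{1}_A$ directly. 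This is more elementary and makes the one-dimensionality of $W_{\pm k}$ transparent without any matrix algebra; the paper's computation, on the other hand, never invokes connectedness or the explicit eigenvectors of $M$ and would adapt verbatim to a disconnected $G$. Both routes arrive at the same conclusion and neither introduces a genuinely new idea beyond the reduction~(\ref{eq:bunkai}) already set up in the text.
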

\begin{proof}
\red{
Let us consider the decomposition (\ref{eq:bunkai}). 
We consider the solution of the eigenequation 
	\[ \det\left(\sum_{\mu\in \sigma(M)}(1-K_\mu^2)(\lambda-F(K_\mu;\sigma_X))\otimes \Pi_\mu \right)=0. \]
The above equation is equivalent to 
	\[ \prod_{\mu\in \sigma(M)}\det(1-K_\mu^2)(\lambda-F(K_\mu;\sigma_X))=0. \]
First we easily notice that $|1-K_\mu^2|=0$ iff $\mu=\pm k$. 
Then if $\mu\neq \pm k$, the solution of $\det(\lambda-F(K_\mu;\sigma_X))=0$ with respect to $\lambda$ is an eigenvalue of $F(S(U);J)$. 
Then we showed that the statement is true at least for $\mu\neq \pm k$. 
So secondly we consider $\mu=k$ case. 
Since 
	\[\ker(1-K_\mu^2)=\ker\begin{bmatrix} k & \mu \\ \mu & k \end{bmatrix},  \]
we have 
	\[ \ker(1-K_k^2)(\lambda-F(K_k;\sigma_X))=\left\{ \begin{bmatrix}\alpha \\ \beta \end{bmatrix} \;|\; (a+c-\lambda)\alpha+(b+d-\lambda)\beta=0 \right\} \]
when we put \[F(K_k;\sigma_X)=\begin{bmatrix} a & b \\ c & d \end{bmatrix}. \]
Moreover it is easy to compute that 
	\[ K_k^j=\frac{1}{k-2}\begin{bmatrix} (k-1)-(k-1)^j & 1-(k-1)^j \\ (k-1)^{j+1}-(k-1) & -1+(k-1)^{j+1} \end{bmatrix}, \]
which implies $a+c=b+d$. Then if $a+c-\lambda\neq 0$; that is, $\lambda\neq (k-1)^n(k-2)$, then 
	\[ \ker(1-K_k^2)(\lambda-F(K_k;\sigma_X))=\left\{ \begin{bmatrix}\alpha \\ \beta \end{bmatrix} \;|\; \alpha+\beta=0 \right\}. \]
However in that case, we can check that $\phi\in \ker(1-{\tilde{M}}^2)$. Thus $\lambda$ must be $(k-1)^n(k-2)$, which is a real value and not a non-trivial eigenvalue. 
The case for $\mu=-k$ can be done in the same way as $\mu=k$ case.
Therefore $\mu=\pm k$ cases can be excluded. 
}
\end{proof}
Theorem~\ref{mainThm} shows the concrete expression of $F_n$ for each $n$. 
For example, for $n=3$ case, (\ref{eigenorbit}) is reduced to 
	\[ \det(\lambda-(K_\mu^3+\sigma K_\mu \sigma))=0 \]
We put the solution $\lambda(\mu):=x(\mu)+\im y(\mu)$ with $x(\mu),y(\mu)\in \mathbb{R}$.
Since $F_n(\mu):=F_n(K_\mu;\sigma_X)$ is a $2$-dimensional matrix, the solution of (\ref{eigenorbit}) with respect to $\lambda$ is that of 
the following characteristic polynomial: 
	\[ \lambda^2-\mathrm{tr}(F_n(\mu))\lambda+\det(F_n(\mu))=0. \]
Define $D_n(\mu):=\mathrm{tr}(F_n(\mu))^2-4\det(F_n(\mu))$. 
The real part; $x(\mu)$, of this solution $\lambda(\mu)$ lies on 
	\begin{multline}\label{eq:Re}
        R_n:=\{ (\mu,x)\in [-k,k]\times \mathbb{R} \;|\; x^2-\mathrm{tr}(F_n(\mu))x+\det(F_n(\mu))=0\red{,\; D_n(\mu)\geq 0} \} \\
        \cup \{ (\mu,x)\in [-k,k]\times \mathbb{R} \;|\; \mathrm{tr}(F_n(\mu))-2x=0,\;D_n(\mu)\leq 0 \}
        \end{multline}
On the other hand, the imaginary part; $y(\mu)$, lies on  the following algebraic equation which draws a kind of hyperelliptic curve for $n\geq 2$:
	\begin{align}\label{eq:Im}
        I_n:=\{ (\mu,y)\in [-k,k]\times \mathbb{R} \;|\; 4y^2+D_n(\mu)=0 \}.
        \end{align}
Therefore the non-trivial eigenvalues of $S(U^n)$ are inherited from all the eigenvalues of the underlying graph satisfying $D_n(\mu)\leq 0$. 
In Appendix, explicit expressions for $R_n(\mu)$ and $I_n(\mu)$ for $n=1,\dots,6$ are described. 
The parity of $D_n(\cdot)$ determines the range of the spectrum of $S(U^n)$. 
On the other hand, the following theorem shows the affect of zero's of $D_n(\cdot)$ on the matrix property of $S(U^n)$. 
\begin{theorem}
A zero of $D_n(\cdot)$; \red{$\mu_*$}, belongs to $\sigma(M)$ \red{and $F_n(\mu_*)\neq cI_2$ with some nonzero constant $c$} if and only if $S(U^n)$ is not diagonalizable. 
\end{theorem}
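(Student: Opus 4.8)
The plan is to reduce the question about diagonalizability of $S(U^n)$ to a spectral analysis of the finitely many $2\times 2$ blocks $F_n(K_\mu;\sigma_X)$ appearing in the decomposition \eqref{eq:bunkai}, and then to determine precisely when such a block fails to be diagonalizable. First I would set up the block structure: by \eqref{linear} and \eqref{map} we have $S(U^n)L = L F_n(\tilde M;\tilde\sigma_X)$, and since $\mathcal L$ together with $\mathcal L^\perp$ spans $\ell^2(A)$, the operator $S(U^n)$ is diagonalizable iff its restriction to $\mathcal L$ is. On $\mathcal L^\perp$ the operator acts as a (real, symmetric-like) piece whose eigenvalues all lie in $\mathbb R$ as noted after \eqref{eq:bunkai}, and in fact this part is always diagonalizable; so the whole question lives on $\mathcal L$. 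On $\mathcal L$, using $M=\sum_\mu \mu\,\Pi_\mu$ and the isomorphism $L$ modulo $\ker L = \ker(1-\tilde M^2)$, the restricted operator is conjugate to $\bigoplus_{\mu\in\sigma(M)} F_n(K_\mu;\sigma_X)\big|_{(\ker(1-K_\mu^2))^\perp}$. For $\mu\ne\pm k$ the factor $1-K_\mu^2$ is invertible, so the full $2\times 2$ block $F_n(K_\mu;\sigma_X)$ contributes; for $\mu=\pm k$ the analysis in the proof of Theorem~\ref{thm:eigenorbit} shows the surviving eigenvalue is the real value $(k-1)^n(k-2)$ (resp.\ its $\mu=-k$ analogue) on a one-dimensional space, which is trivially diagonalizable. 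Hence $S(U^n)$ is diagonalizable iff every $2\times 2$ block $F_n(K_\mu;\sigma_X)$ with $\mu\in\sigma(M)\setminus\{\pm k\}$ is diagonalizable.

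Next I would characterize when a single $2\times 2$ matrix $F_n(\mu):=F_n(K_\mu;\sigma_X)$ fails to be diagonalizable. A $2\times 2$ matrix is non-diagonalizable iff it has a repeated eigenvalue but is not a scalar multiple of $I_2$; the repeated-eigenvalue condition is exactly $D_n(\mu)=\mathrm{tr}(F_n(\mu))^2-4\det(F_n(\mu))=0$, where $D_n$ is the discriminant introduced before \eqref{eq:Re}. So for $\mu\in\sigma(M)\setminus\{\pm k\}$, the block $F_n(\mu)$ is non-diagonalizable iff $D_n(\mu)=0$ and $F_n(\mu)\ne cI_2$. Combining with the previous paragraph: $S(U^n)$ is non-diagonalizable iff there exists $\mu_*\in\sigma(M)$ with $D_n(\mu_*)=0$ and $F_n(\mu_*)\ne cI_2$. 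The one remaining point is to confirm that such a $\mu_*$ cannot be $\pm k$: one checks that at $\mu=k$, $F_n(K_k;\sigma_X)$ has the special form computed in the proof of Theorem~\ref{thm:eigenorbit} (with $a+c=b+d$), from which either $D_n(k)\ne 0$ or $F_n(k)$ is forced to be scalar on the relevant subspace, so the exceptional eigenvalues never produce a Jordan block; the same for $\mu=-k$. This yields the stated equivalence.

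The step I expect to be the main obstacle is verifying rigorously that the direct sum over $\mu$ genuinely decouples the Jordan structure — i.e.\ that a Jordan block of $S(U^n)|_{\mathcal L}$ must come from a single $\mu$-block and cannot be an "emergent" defect arising from the interaction of several eigenspaces of $M$ through the non-orthogonal frame $L$ (recall $L$ is not unitary, and $\ker L\ne 0$). The cleanest way to handle this is to pass to an honest basis of $\mathcal L$: pick, inside $\ell^2(V)\oplus\ell^2(V)$, a complement of $\ker(1-\tilde M^2)$ that is $\tilde M$-invariant and respects the spectral decomposition $\{\Pi_\mu\}$ (possible since $\tilde M$ commutes with $I_2\otimes\Pi_\mu$), and observe that $L$ restricted to this complement is injective with image $\mathcal L$ and intertwines $F_n(\tilde M;\tilde\sigma_X)$ with $F_n(S(U);J)|_{\mathcal L}$. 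Then the Jordan form of $F_n(S(U);J)|_{\mathcal L}$ equals that of $F_n(\tilde M;\tilde\sigma_X)$ on that complement, which is block-diagonal over $\mu$ by construction. Once this reduction is clean, the rest is the elementary $2\times 2$ fact about $D_n$ and scalar matrices together with the already-established $\mu=\pm k$ exclusion.
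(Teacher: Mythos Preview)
Your proposal is correct and follows the same route as the paper: reduce diagonalizability of $S(U^n)$ to that of the $2\times 2$ blocks $F_n(K_\mu;\sigma_X)$ via the intertwining relation $S(U^n)L=LF_n(\tilde M;\tilde\sigma_X)$ and the spectral decomposition of $M$, then invoke the elementary fact that a $2\times 2$ matrix is non-diagonalizable iff its discriminant vanishes and it is not a scalar multiple of $I_2$. The paper's own proof is far terser than yours---it records only the $2\times 2$ criterion and does not explicitly treat the points you carefully flag (diagonalizability on $\mathcal L^\perp$, the one-dimensional $\mu=\pm k$ contributions, and the passage through the non-injective frame $L$); your handling of these is sound and supplies detail the paper leaves implicit.
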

\begin{proof}
This is obtained by the fact 
that the geometric multiplicity of the eigenvalue $\lambda(\mu)$ of $K_\mu$ is strictly less than the algebraic multiplicity for such $\mu$. 
\red{ If $\ker(\lambda-F_n(\mu))\subset \ker(\lambda-F_n(\mu))^2$, then $D_n(\mu)=0$ since the algebraic multiplicity must exist. 
On the other hand, if $D_n(\mu)=0$ then since $F_n(\mu)$ is a $2$-dimensional matrix, the only case that $\ker(\lambda-F_n(\mu))= \ker(\lambda-F_n(\mu))^2$ is the case $F_n(\mu)=cI_2$. }
\end{proof}

Thus if the spectrum of the underlying graph has the branch point of $R_n$, then the induced $S(U^n)$ becomes non-diagonalizable.  

Finally we draw the support of non-trivial eigenvalues with the parameter $\mu\in [-k,k]\supset \sigma(M)$ in the complex plane, that is, 
$\{(x(\mu),y(\mu)) \;|\; \mu\in[-k,k],\;D_{n}(\mu)\leq 0\}$ in Fig.~\ref{Fig.4} 
and \segawa{$I_n$ and $R_n$ to show how $\Re(\sigma(S(U)^n))$ and $\Im(\sigma(S(U)^n))$ are 
inherited from $\sigma(M)$ and the value $\mu_*$ which makes $S(U^n)$ be non-diagonalizable in Fig.~\ref{Fig.7} for $n=1,2,\dots,6$. }
%%fig4
\begin{figure}[htbp]
  \begin{center}
    \begin{tabular}{ccc}

      % 1
      \begin{minipage}{0.33\hsize}
        \begin{center}
          \includegraphics[clip, width=3.3cm]{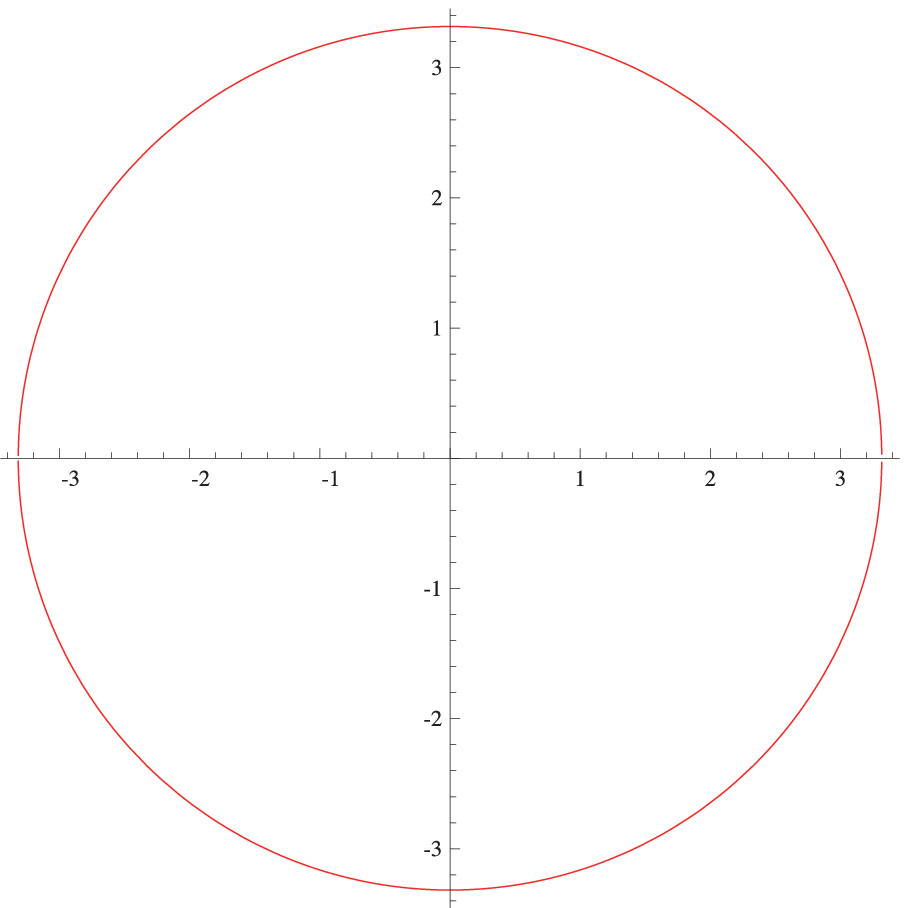}
          \hspace{1.6cm} (1)The support of eigenvalue of $S(U)$
        \end{center}
      \end{minipage}
&
      % 2
      \begin{minipage}{0.33\hsize}
        \begin{center}
          \includegraphics[clip, width=3.3cm]{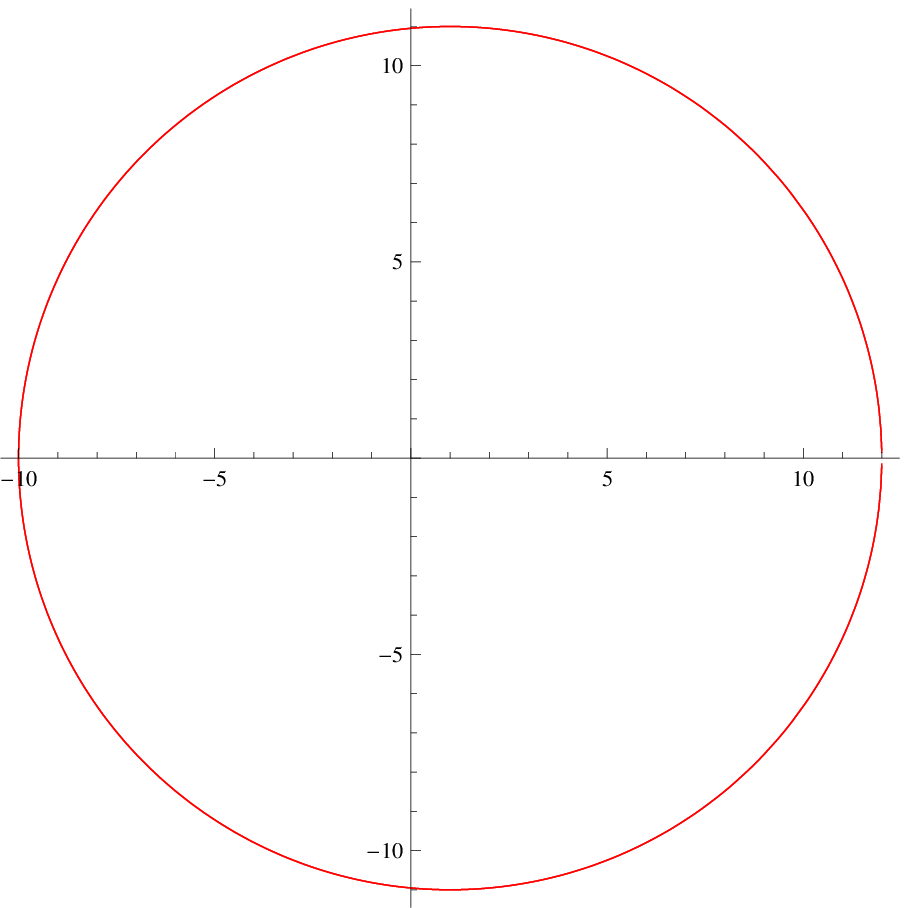}
          \hspace{1.6cm} (2)The support of eigenvalue of $S(U^2)$
        \end{center}
      \end{minipage}
&
      % 3
      \begin{minipage}{0.33\hsize}
        \begin{center}
          \includegraphics[clip, width=3.3cm]{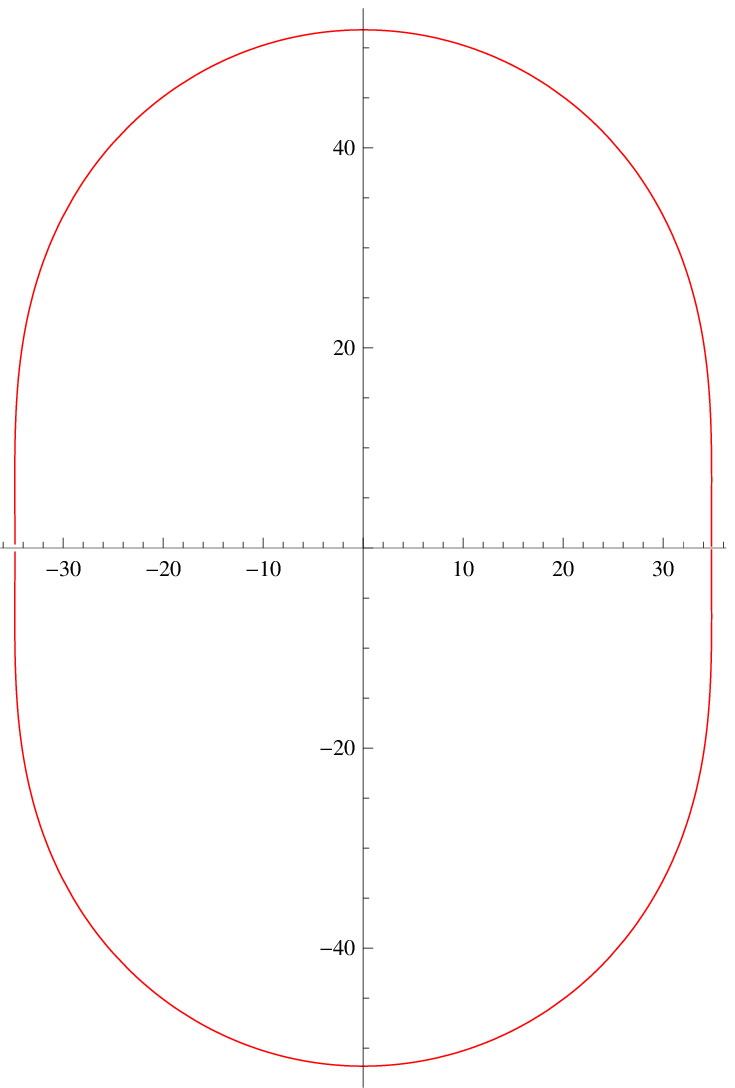}
          \hspace{1.6cm} (3)The support of eigenvalue of $S(U^3)$
        \end{center}
      \end{minipage}
\\
\\
      % 4
      \begin{minipage}{0.33\hsize}
        \begin{center}
          \includegraphics[clip, width=3.3cm]{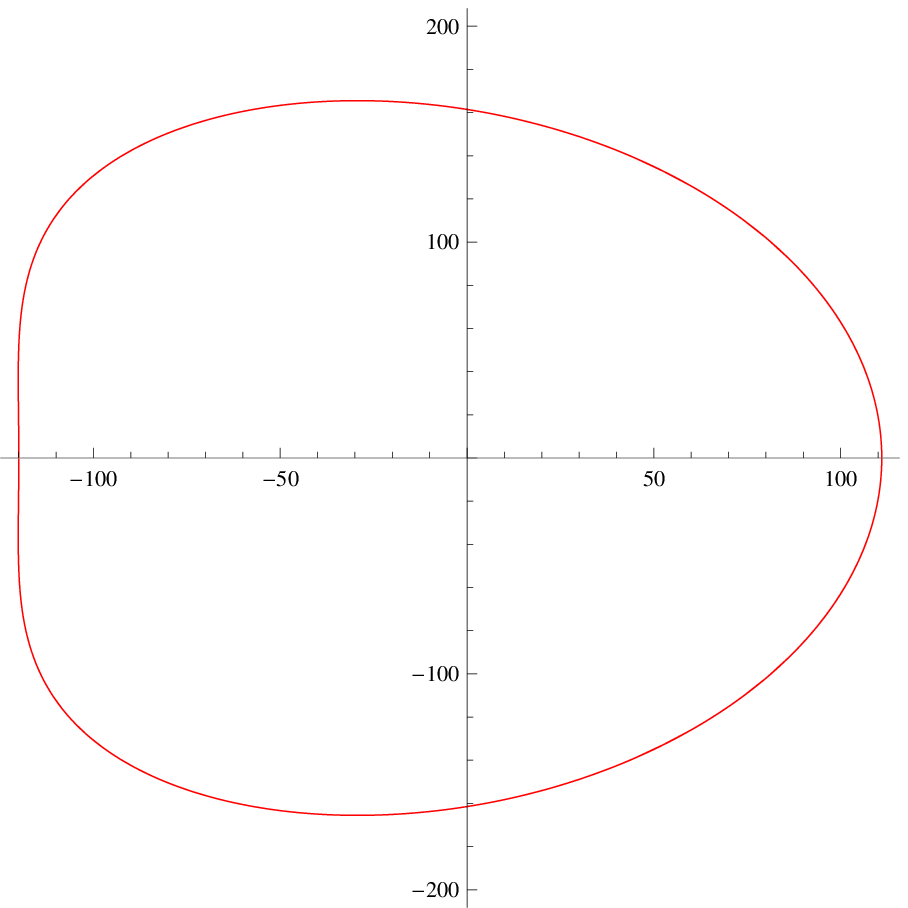}
          \hspace{1.6cm} (4)The support of eigenvalue of $S(U^4)$
        \end{center}
      \end{minipage}
&
      % 5
      \begin{minipage}{0.33\hsize}
        \begin{center}
          \includegraphics[clip, width=3.3cm]{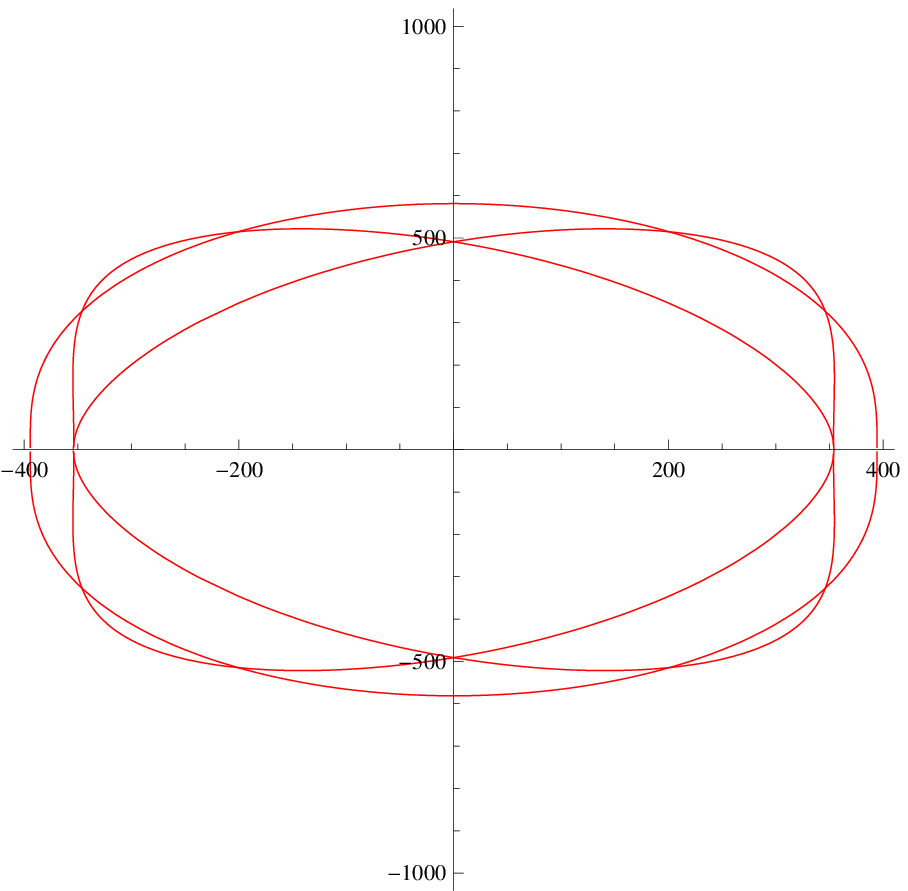}
          \hspace{1.6cm} (5)The support of eigenvalue of $S(U^5)$
        \end{center}
      \end{minipage}
&
      % 6
      \begin{minipage}{0.33\hsize}
        \begin{center}
          \includegraphics[clip, width=3.3cm]{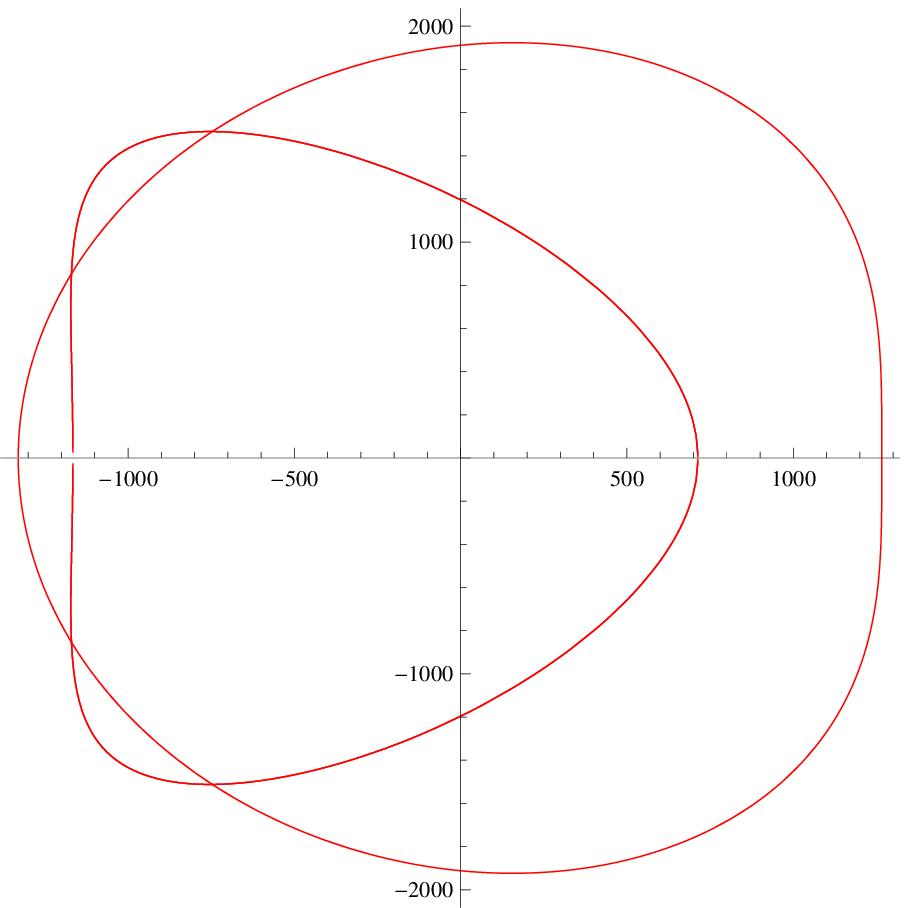}
          \hspace{1.6cm} (6)The support of eigenvalue of $S(U^6)$
        \end{center}
      \end{minipage}
    \end{tabular}
    \caption{ Figures (1)--(6) are the support of the non-trivial eigenvalues in $\mathbb{C}$ of $S(U),S(U^2),\dots,S(U^6)$, respectively for $k=12$ and $g(G)>2(n-1)$. 
    The horizontal and vertical lines are real and imaginary lines, respectively.
    The eigenvalues of $S(U^n)$ must lie on this support for each $n$. }
    \label{Fig.4}
  \end{center}
\end{figure}
%
%
%Fig.
\begin{figure}[htbp]
  \begin{center}
    \begin{tabular}{ccc}

      % 1
      \begin{minipage}{0.33\hsize}
        \begin{center}
          \includegraphics[clip, width=5cm]{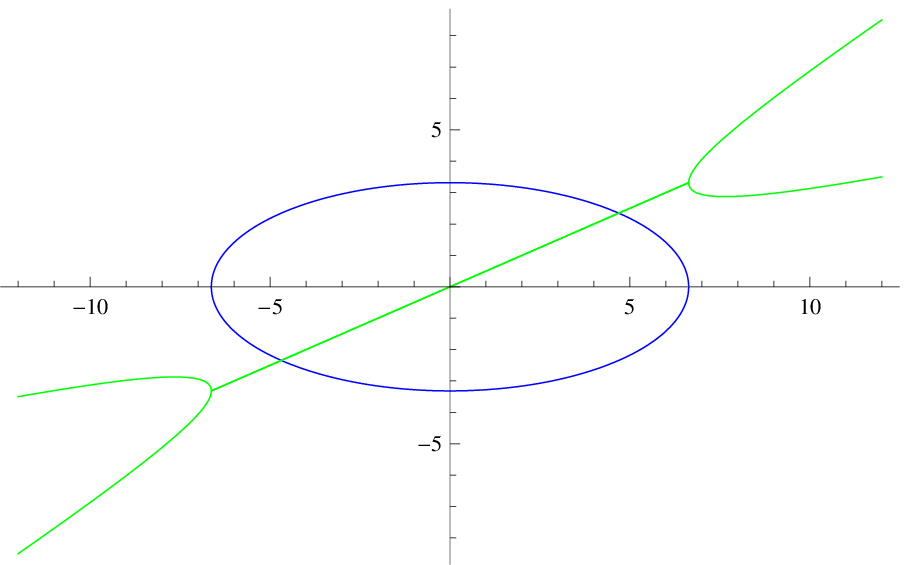}
          \hspace{1.6cm} {\scriptsize $I_1$ and $R_1$ for $k=12$} 
        \end{center}
      \end{minipage}
&
      % 2
      \begin{minipage}{0.33\hsize}
        \begin{center}
          \includegraphics[clip, width=5cm]{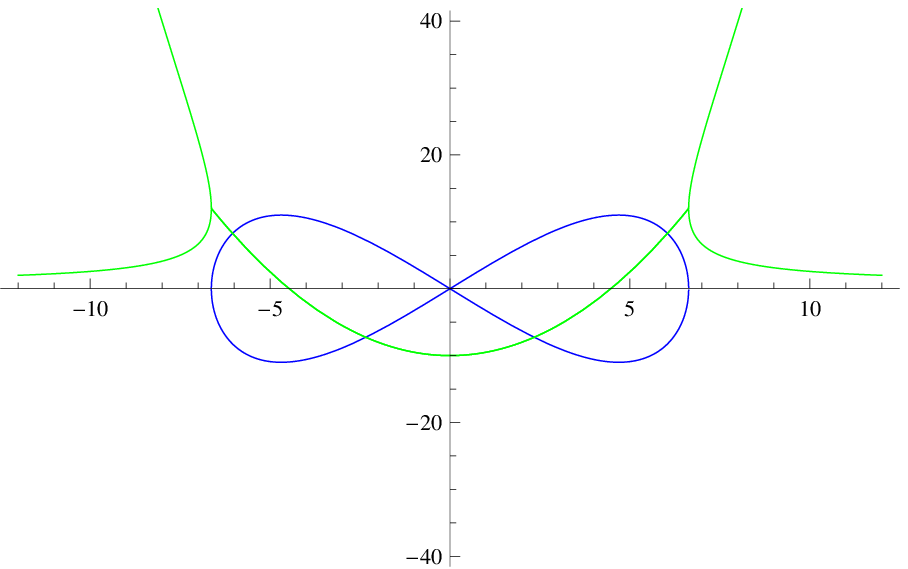}
          \hspace{1.6cm} {\scriptsize $I_2$ and $R_2$ for $k=12$}
        \end{center}
      \end{minipage}
&
      % 3
      \begin{minipage}{0.33\hsize}
        \begin{center}
          \includegraphics[clip, width=5cm]{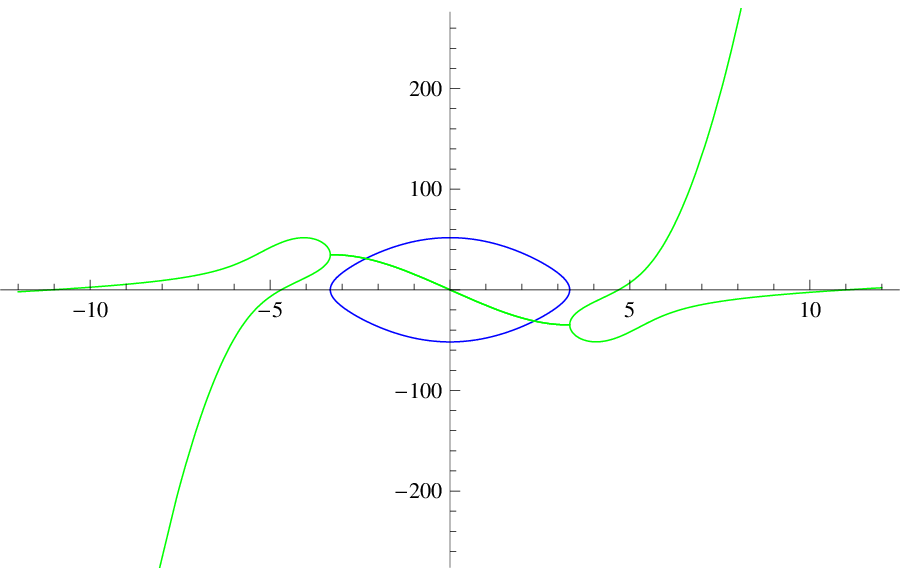}
          \hspace{1.6cm}  {\scriptsize $I_3$ and $R_3$ for $k=12$}
        \end{center}
      \end{minipage}
\\
\\
      % 4
      \begin{minipage}{0.33\hsize}
        \begin{center}
          \includegraphics[clip, width=5cm]{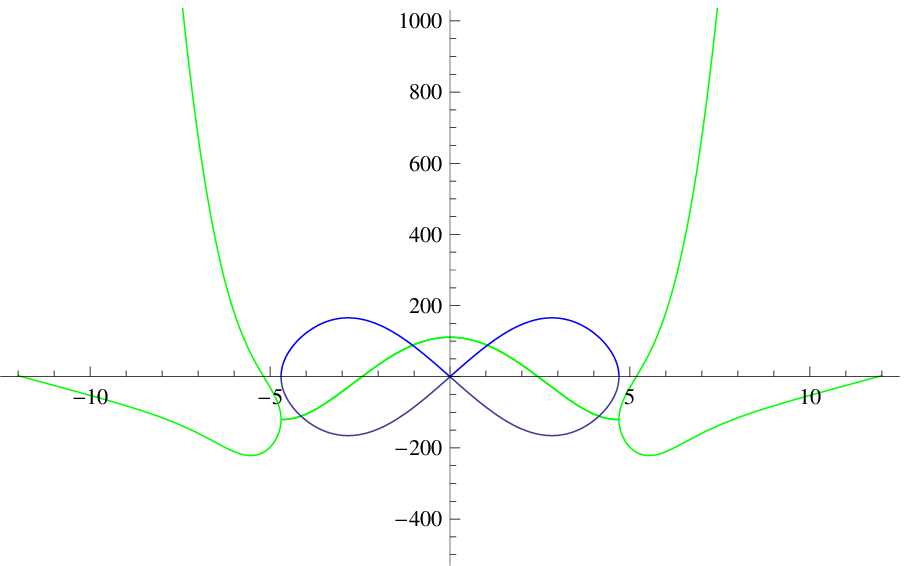}
          \hspace{1.6cm} {\scriptsize $I_4$ and $R_4$ for $k=12$}
        \end{center}
      \end{minipage}
&
      % 5
      \begin{minipage}{0.33\hsize}
        \begin{center}
          \includegraphics[clip, width=5cm]{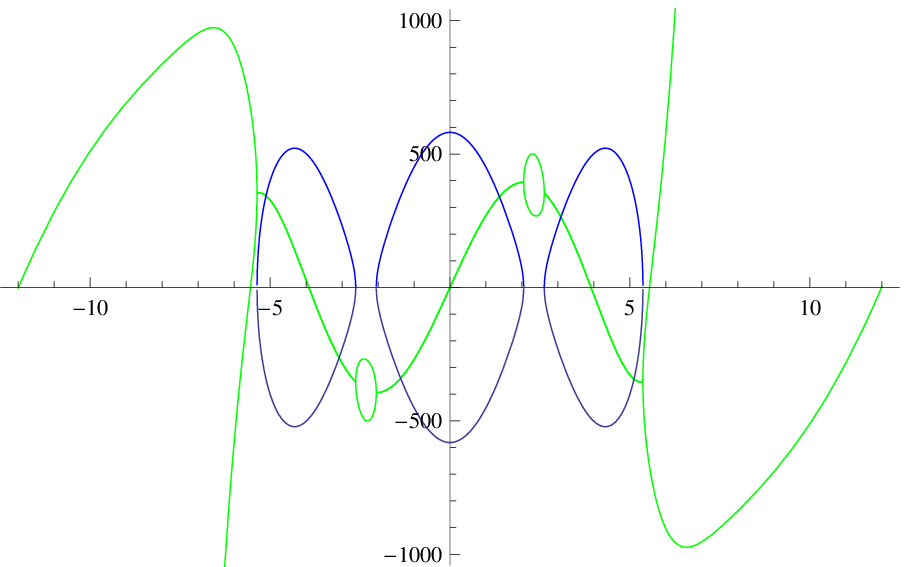}
          \hspace{1.6cm}  {\scriptsize $I_5$ and $R_5$ for $k=12$}
        \end{center}
      \end{minipage}
&
      % 6
      \begin{minipage}{0.33\hsize}
        \begin{center}
          \includegraphics[clip, width=5cm]{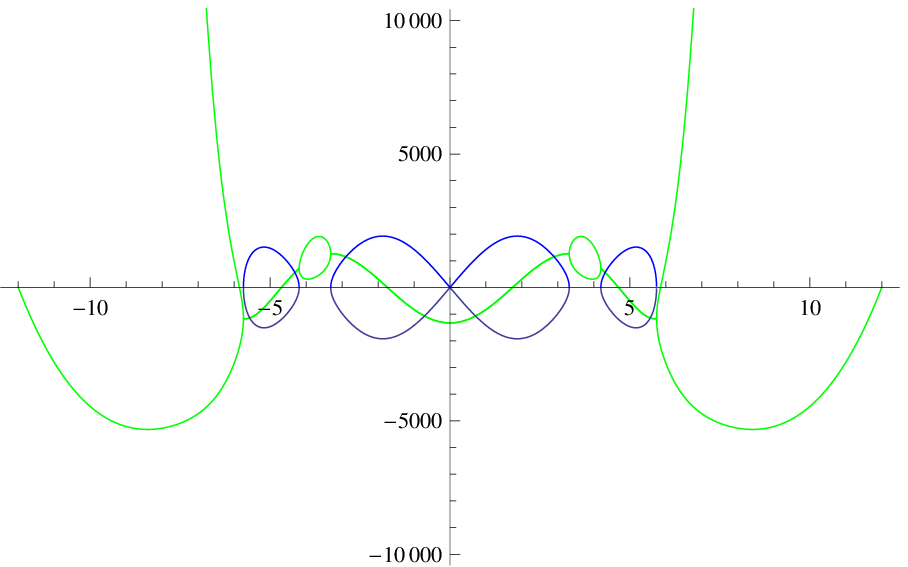}
          \hspace{1.6cm} {\scriptsize $I_6$ and $R_6$ for $k=12$}
        \end{center}
      \end{minipage}
    \end{tabular}
    \caption{ The orbits of real and imaginary parts of eigenvalue of $S(U^n)$ for $k=12$ and $g(G)>2(n-1)$ ($n=1,\dots,6$)
    obtained by (\ref{eq:Re}) and (\ref{eq:Im}), respectively: 
    The horizontal line is the parameter $\mu\in[-k,k]\supset \sigma(M)$, 
    and the vertical line is the real and imaginary parts of eigenvalue of $S(U^n)$; $R_n$ and $I_n$, respectively.  
    The green curve depicts $R_n$ and the blue curve depicts $I_n$. 
    Thus $R_n$ and $I_n$ show how the spectrum of the graph is mapped to the real and imaginary parts of that of induced $S(U^n)$. 
    The spectrum of the graph has the branch points in $R_n$ if and only if $S(U^n)$ is non-diagonalizable.}
    \label{Fig.7}
  \end{center}
\end{figure}

\section{Summary}
We obtained the structure theorem of the $n$-th power of the Grover walk operator on $G$ (Theorem~\ref{mainThm})
and its support of the non-trivial spectrum for the girth $g(G)>2(n-1)$ (Theorem~\ref{thm:eigenorbit}). 
The non-trivial spectrum is not living on the real line, which is a graph analogue of the non-trivial poles of the Riemann zeta function. 
We showed that this problem is converted to solving the phase pattern of the one-dimensional quantum walk in Definition~\ref{DesQW} 
which is only determined by the regularity of $G$. 
The curious phase pattern can be seen in Fig.~3 and the support of the spectrum can be seen in Fig.~4. 
Solving rigorously this phase pattern is one of the interesting future's problems.

\appendix
\def\thesection{Appendix \Alph{section}}
\renewcommand{\theequation}{A.\arabic{equation}}
\setcounter{equation}{0}

\section{}
The curve $R_n$ for $D_n(\mu)\leq 0$ is the set of the zero's of the following polynomial $Q_n$ with respect to $\mu$ and $x$:
\[ Q_1=\mu-2x \]
\[ Q_2=(4-2 k)+\mu^2-2x \]
\[ Q_3=(4-3 k) \mu+\mu^3-2x \]
\[ Q_4=\left(6-6 k+2 k^2\right)+(5-4 k) \mu^2+\mu^4-2x \]
\[ Q_5=(6 - 11 k + 5 k^2) \mu + (6 - 5 k) \mu^3 + \mu^5-2x \]
\[ Q_6=\left(4-6 k+6 k^2-2 k^3\right)+\left(10-20 k+9 k^2\right) \mu^2+(7-6 k) \mu^4+\mu^6-2x \]
The curve $I_n$ is the set of the zero's of the following polynomial $P_n$ with respect to $\mu$ and $y$ which draws a hyperelliptic curve for $n\geq 2$, 
where $D_n(\mu)$ is obtained by $P_n-4y^2$. 
\[ P_1=(4-4 k)+\mu^2+4y^2 \]
\[ P_2=(4-4 k) \mu^2+\mu^4+4y^2 \]
\[ P_3=-8 \left(-2+4 k-3 k^2+k^3\right)+\left(16-24 k+13 k^2\right) \mu^2+(4-6 k) \mu^4+\mu^6+4y^2 \]
\[ P_4=12 \left(3-7 k+6 k^2-2 k^3\right) \mu^2+\left(25-44 k+24 k^2\right) \mu^4+(6-8 k) \mu^6+\mu^8+4y^2 \]
\begin{multline*} 
P_5=
(16 - 48 k + 76 k^2 - 68 k^3 + 36 k^4 - 8 k^5) + (48 - 152 k + 205 k^2 - 146 k^3 + 41 k^4) \mu^2 \\
+ (52 - 156 k + 174 k^2 - 66 k^3) \mu^4 + (28 - 66 k + 39 k^2) \mu^6 \\
+ (8 - 10 k) \mu^8 + \mu^{10}+4y^2
\end{multline*}
\begin{multline*}
 P_6=
 -12 ((-1 + k)^3 (3 - 7 k + 5 k^2)) \mu^2 + (100 - 428 k + 704 k^2 - 524 k^3 + 149 k^4) \mu^4 \\
 + (100 - 324 k + 358 k^2 - 136 k^3) \mu^6 + (45 - 100 k + 58 k^2) \mu^8 \\
 + (10 - 12 k) \mu^{10} + \mu^{12}+4y^2 
\end{multline*}
Remark that for $k\geq 5$, the forms of $\{Q_n\}$ and $\{P_n\}$ need classifications with respect to the value $k$; 
see (\ref{SU3})--(\ref{SU6}); the above forms are in the case of $k\geq 12$. 
The forms of $Q_n$ and $P_n$ for general $n$ can be obtained by running the discriminant quantum walk on $\mathbb{Z}$ until $n$-step, 
and taking the phase measurement; e.g., for $k=20$, the phase pattern until $n=100$ can be referred in Fig.~\ref{Fig.2}.

\begin{small}
\bibliographystyle{jplain}

\end{small}

\end{document}